\newtheorem{theorem}{Theorem}
\newtheorem{lemma}{Lemma}
\newtheorem{corollary}{Corollary}
\newtheorem{definition}{Definition}
\newtheorem{remark}{Remark}
\newtheorem{example}{Example}
\def\bfc{\mbox{\boldmath$c$}}
\def\bfv{\mbox{\boldmath$v$}}
\def\bfi{\mbox{\boldmath$i$}}
\def\bfu{\mbox{\boldmath$u$}}
\def\qed{ \rule{.08in}{.08in}}
\title{\LARGE \bf {On} Spectral Properties of Signed Laplacians with\\{Connections to Eventual Positivity}\thanks{This work was supported
    in parts by the Research Grants Council of Hong Kong Special Administrative Region, China, under the Theme-Based Research Scheme T23-701/14-N, the Knut and Alice Wallenberg Foundation, the Swedish Research Council, the National Science Foundation under Grant 1901599, and a MURI grant at the University of Illinois.}}
\author{Wei~Chen,~Dan~Wang,~Ji~Liu,~Yongxin~Chen,~Sei~Zhen~Khong,~Tamer~Ba\c{s}ar,\\~Karl~H.~Johansson,~and~Li~Qiu\vspace{-5pt}
\thanks{W. Chen is with the Department of Mechanics and Engineering Science \& Beijing Innovation Center for Engineering Science and Advanced Technology, Peking University, Beijing 100871, China. {\tt w.chen@pku.edu.cn}}
\thanks{D. Wang and L. Qiu are with the Department of Electronic and Computer Engineering, Hong Kong University of Science and Technology, Clear Water Bay, Kowloon, Hong Kong, China. {\tt dwangah@connect.ust.hk, eeqiu@ust.hk}}
\thanks{J. Liu is with the Department of Electrical and Computer Engineering, Stony Brook University, Stony Brook, NY 11794-2350, USA. {\tt ji.liu@stonybrook.edu}}
\thanks{Y. Chen is with the School of Aerospace Engineering, Georgia Institute of Technology, Atlanta, GA 30332, USA. {\tt yongchen@gatech.edu}}
\thanks{S. Z. Khong is an independent researcher. {\tt szkhongwork@gmail.com}}
\thanks{T. Ba\c{s}ar is with the Coordinated Science Laboratory, University of Illinois at
    Urbana-Champaign, Urbana, IL 61801, USA.  {\tt basar1@illinois.edu}}
\thanks{K. H. Johansson is with the School of Electrical Engineering and Computer Science, KTH Royal
Institute of Technology, Stockholm, Sweden. {\tt kallej@kth.se}}
    }
\begin{document}

\maketitle
\thispagestyle{empty}
\pagestyle{empty}


\begin{abstract}
Signed graphs have appeared in a broad variety of applications, ranging from social networks to biological networks, from distributed control and computation to power systems. In this paper, we investigate spectral properties of signed Laplacians for undirected signed graphs. We find conditions on the negative weights under which a signed Laplacian is positive semidefinite via the Kron reduction and multiport network theory. For signed Laplacians that are indefinite, we characterize their inertias with the same framework. Furthermore, we build connections between signed Laplacians, generalized M-matrices, and eventually exponentially positive matrices.
\end{abstract}

\section{Introduction}
A signed weighted graph is a graph whose nodes are linked by edges of positive and negative weights. Research of signed graphs can be traced back to Fritz Heider's psychological study on interpersonal relations, where positive and negative weights represent liking and disliking among individuals \cite{heider1946attitudes}. This stimulated the interest of mathematician Frank Harary who introduced the notion of balance of a signed graph in 1953 \cite{harary1953notion}. These pioneering works have led to psychological and sociological studies by means of the mathematical tool of signed graphs. We refer interested readers to \cite{taylor1970balance} and references therein for the advances until the late 1960s.

Recently, research on signed weighted graphs has seen a revival, driven by applications in a broad range of areas including opinion dynamics {\cite{altafini,altafini2015predictable,shi2016evolution,proskurnikov2016opinion,liu2017expo,liu2018polarizability,SAB}}, distributed control and optimization \cite{Boyd2004,Zelazo,yxchen,zelazo2017,pan2016laplacian,ahmadizadeh2017eigenvalues,zhang2017bipartite}, data clustering and graph-based machine learning \cite{kunegis2010spectral}, biological networks \cite{nishikawa2010network}, power systems {\cite{motter2013spontaneous,song2017network,ding2017impact}}, and knot theory \cite{lien2000dual}. See also \cite{SAB} for a recent review on convergence properties of dynamics over deterministic or random signed networks.

To be specific, negative edge weights have been employed to represent antagonistic relations in social networks, anticorrelation among data in clustering, and inhibitory interactions in interneuron networks. In the study of small disturbance angle stability of power systems, {negative edge weights may occur due to some transmission lines with negative reactance or large angle differences across certain transmission lines \cite{motter2013spontaneous,song2017network}.}
In distributed control and optimization, negative weights may stem from faulty communication among agents or adversarial attacks on the network. In some cases, negative weights even arise as a result of optimal design. References \cite{Boyd2004} and \cite{nishikawa2010network} have shown that allowing negative weights in the design may have positive effects on accelerating the convergence in both distributed averaging and synchronization.

All in all, there is abundant motivation to study signed weighted graphs.
It is often the case that studying dynamics over a signed weighted graph requires the analysis of spectral properties of an associated signed Laplacian matrix.
Consider, for example, the case of a continuous-time multi-agent system interconnected over a signed graph aiming to reach consensus. Under certain protocol, consensus can be reached if and only if the associated signed Laplacian has all its eigenvalues in the open right half complex plane except for a simple zero eigenvalue.
In fact, many existing consensus-based distributed control, estimation and optimization algorithms, which are initially designed for conventional weighted graphs can be extended to signed weighted graphs by replacing conventional Laplacians with signed ones. A necessary condition for those algorithms still to function correctly over a signed weighted graph is that the corresponding signed Laplacian has a simple zero eigenvalue and all the other eigenvalues have positive real parts.
Signed Laplacians have attracted increasing attention recently.
In this paper, we focus on undirected signed weighted graphs for which the preceding spectral condition on signed Laplacians simplifies to positive semidefiniteness with a simple zero eigenvalue.
Below, we briefly review some closely related works on undirected signed Laplacians.

A fundamental issue frequently discussed in the literature is the positive semidefiniteness of signed Laplacians. Exploring conditions rendering signed Laplacians positive semidefinite is of great importance in many applications.
{It was shown in \cite{Zelazo} that a signed Laplacian with a single negative edge weight is positive semidefinite if and only if the absolute value of the negative edge weight is less than or equal to the reciprocal of the effective resistance between the nodes of the negative edge over the positive subgraph. Therein the authors extended this condition to signed graphs with multiple negative edges under certain additional constraints on the locations of the negative edges.} These results were re-established in \cite{yxchen} using both a geometrical approach and a passivity-based approach. { In \cite{zelazo2017}, the more general case of a signed graph with multiple negative edges and no restrictions on the locations of negative edges was considered, where necessary and sufficient conditions on the semidefiniteness of signed Laplacians were obtained via linear matrix inequalities (LMIs). Similar LMI conditions also appeared in \cite{song2017network,ding2017impact}. However, in the most general case, an explicit condition given in terms of effective resistances has not been available in these papers.}

When a signed Laplacian matrix is not positive semidefinite, its inertia, i.e., the numbers of its negative, zero, and positive eigenvalues with multiplicity counted, often plays an important role in applications. It is known that the type of an unstable equilibrium point in a power system is decided by the inertia of a certain signed Laplacian \cite{song2017network}. {Reference \cite{la1} obtained bounds on the inertia of a signed Laplacian based on the topology of the graph.} Reference \cite{pan2016laplacian} considered the problem of how the structure of a signed graph influences the inertia of the associated signed Laplacian.

Another issue worthy of attention stems from the observation that a signed Laplacian is not an M-matrix\footnote{A square matrix $M$ is said to be an M-matrix if it can be expressed as $M=sI-B$, where $I$ is the identity matrix, $B$ is nonnegative, and $s\ge \rho(B)$.} as opposed to the conventional Laplacian with only positive weights. Many nice properties of M-matrices are inherited from nonnegative matrices. It is well known that a nonnegative matrix possesses the Perron-Frobenius property \cite{HorJoh85}, i.e., its spectral radius is an eigenvalue with a corresponding nonnegative eigenvector. However, a matrix possessing the Perron-Frobenius property is not necessarily nonnegative. For this reason, much interest has been generated in exploring so-called eventually nonnegative matrices, namely, matrices which become nonnegative after a certain finite power \cite{noutsos2008reachability,noutsoslaa}. Furthermore, based on eventual nonnegativity, various generalized M-matrices have been proposed and studied \cite{elhashash2008generalizations,olesky2009m}. {Recently, the interplay between eventual nonnegativity and multi-agent consensus has also been considered \cite{altafini2015predictable,jiang2016sign}.} In light of these developments, it is of interest to study whether a signed Laplacian belongs to a class of generalized M-matrices.

{In this paper, we present results in the three topical areas identified above. The main contributions of the paper can be summarized as follows:

\begin{enumerate}[(a)]
\item We study semidefiniteness of signed Laplacians with multiple negative weights and no restrictions on where the negative edges are located with the $n$-port network theory, yielding explicit necessary and sufficient conditions in terms of conductance (or resistance) matrices. The inertia of an indefinite signed Laplacian is also characterized via the conductance matrix.
\item We further characterize semidefiniteness and inertia of signed Laplacians via the Kron reduction, a seminal tool in power systems.
\item We establish connections between signed Laplacians, generalized M-matrices, and eventual positivity.
\end{enumerate}
Part of the results regarding (a) have appeared in the authors' conference papers \cite{ChenCDC16,chen2016semidefiniteness,chen2017spectral}. These results are now unified under the framework of $n$-port network. Some proofs not included in the conference versions are now included here and the unnecessary assumption adopted in \cite{chen2017spectral} when characterizing the inertia is also relaxed. The contributions (b) and (c) are new and go much beyond the scope of discussions in the conference papers. The new contributions provide a more comprehensive view of the spectral properties of signed Laplacians and their connections to eventual positivity.}

The rest of the paper is organized as follows. {The problem setup and motivating applications are introduced in Section II.} Some preliminaries are given in Section III. Positive semidefiniteness of signed Laplacians is investigated in Section IV, followed by characterization of the inertias of indefinite signed Laplacians in Section V. Connections between signed Laplacians and eventual positivity are discussed in Section VI. The paper is concluded in Section VII.

{\em Notation:} Denote by $\mathbf{1}$ the vector with all its
elements equal to $1$, where the dimension is to be understood from the context. Denote by $d_i$ the vector whose $i$th element is $1$ and all the other elements are $0$. Also, let $d_{ij} \!=\! d_i-d_j$. Denote the spectral radius of a square matrix $A$ by $\rho(A)$. {Denote the corank of a matrix $A\in\mathbb{R}^{n\times n}$ by $\mathrm{corank}(A)=n-\mathrm{rank}(A)$.} A square matrix $A$ is said to be nonnegative (positive, respectively), denoted by $A\unrhd0$ ($A\rhd 0$, respectively), if all the elements of $A$ are nonnegative (positive, respectively). {A square matrix $A$ is said to be exponentially nonnegative (exponentially positive, respectively) if $e^{At}=\sum_{k=0}^{\infty}\frac{(tA)^k}{k!}\unrhd 0$ ($e^{At}\rhd 0$, respectively) for all $t> 0$.} Denote by $|\alpha|$ the cardinality of a set $\alpha$ and by $\alpha\backslash\beta$ \mbox{the relative complement of a set $\beta$ in $\alpha$}.

For a real symmetric matrix $S$, we write $S\ge 0$  when $S$ is
positive semidefinite, and $S>0$ when $S$ is positive definite. Denote by $\pi(S)=\{\pi_-(S),\pi_0(S),\pi_+(S)\}$ the inertia of a real symmetric matrix $S$, where $\pi_-(S)$, $\pi_0(S)$, and $\pi_+(S)$ are respectively the numbers of negative, zero, and positive eigenvalues with multiplicity counted. For a matrix
\[S=\begin{bmatrix}S_{11}&S_{12}\\S_{21}&S_{22}\end{bmatrix},\]
denote by $S\slash_{22}=S_{11}-S_{12}S^{\dagger}_{22}S_{21}$ the (generalized) Schur complement of $S_{22}$ in $S$, where $S^{\dagger}_{22}$ means the Moore-Penrose pseudoinverse of $S_{22}$. Similarly, $S\slash_{11}=S_{22}-S_{21}S^{\dagger}_{11}S_{12}$.

\section{{Problem Setup and Applications}} \label{wl}
\subsection{Signed graphs and signed Laplacians}
Consider an undirected graph $\mathbb{G}\!=\!(\mathcal{V},\mathcal{E})$ with a set of nodes $\mathcal{V}\!=\!\{1,2,\dots,n\}$ and a set of edges
$\mathcal{E}\!=\!\{e_1,e_2,\dots,e_m\}$.
We use $(i,j)$ to represent the edge connecting node $i$ and node $j$, and associate with each edge $(i,j)$ a real-valued nonzero weight $a_{ij}$, which can be either positive or negative. If node $i$ and node $j$ are not connected by an edge, $a_{ij}$ is understood to be zero.
Such a graph is called a {\em signed weighted graph}. For brevity, hereinafter signed weighted graphs are referred to simply as signed graphs.

For an undirected graph $\mathbb{G}$, a spanning tree, i.e., a spanning subgraph\footnote{A spanning subgraph of $\mathbb{G}$ is a graph which contains the same set of nodes as $\mathbb{G}$ and whose edge set is a subset of that of $\mathbb{G}$.} which itself is a tree, exists if and only if $\mathbb{G}$ is connected. A spanning forest is a spanning subgraph containing a spanning tree in each connected component of the graph. A spanning tree can be regarded as a special case of a spanning forest. Hereinafter we use
$\mathbb{F}$ to represent a spanning tree or a spanning forest, depending on whether the underlying graph is connected or not.

For a signed graph, the associated {\em signed Laplacian matrix}
 $L\!=\![l_{ij}]\!\in\!\mathbb{R}^{n\times n}$ is defined by \cite{altafini2015predictable,Zelazo,la1}
\begin{align*}
l_{ij}=\begin{cases}-a_{ij},&i\neq j,\\ \sum_{j=1,j\neq i}^n a_{ij},& i=j.\end{cases}
\end{align*}
A signed Laplacian matrix is no different from a conventional one, except that the conventional Laplacian has only positive weights. Some properties known for conventional Laplacian matrices continue to hold in the presence of negative weights.
For instance, a signed Laplacian $L$ is symmetric, and hence all the eigenvalues are real. Also, $L$ has a zero eigenvalue with a corresponding eigenvector being $\mathbf{1}\in\mathbb{R}^n$.

However, signed Laplacians also carry some fundamental differences. First, unlike the conventional Laplacians that are positive semidefinite, signed Laplacians may be indefinite. Second, while the multiplicity of zero eigenvalue of a conventional
Laplacian equals the number of connected components in the underlying graph, this is in general not true for a signed Laplacian. Third, any conventional Laplacian is an M-matrix and its negation is exponentially nonnegative \cite{berman1994nonnegative}. This is no longer the case for a signed Laplacian.

Considering the aforementioned similarities and differences, we address in this paper the following questions regarding the signed Laplacian $L$:
\begin{enumerate}[1)]
\item How to characterize the negative weights under which $L$ is positive semidefinite with a simple zero eigenvalue?
\item In case $L$ is indefinite, is there a simple way to characterize its inertia?
\item Is it possible that under some conditions, $L$ is some generalized M-matrix and $-L$ is some generalized exponentially nonnegative matrix? What kind of generalized M-matrices and exponentially nonnegative matrices can we consider?
\end{enumerate}

{These questions are not only of interest from a mathematical perspective, but also of importance in many applications. To further motivate this study, we shall discuss some representative applications in the following subsections.}

Before proceeding, we introduce a useful factorization of $L$. Let
$W\!=\!\mathrm{diag}\{w_1,w_2,\dots,w_m\}$, where $w_{k}\!=\!a_{ij}, \text{ for }(i,j)=e_k$.
Also, assign an (arbitrary) orientation to each edge $e_k$ by denoting one endpoint as head and the
other as tail. The oriented incidence matrix $D\in\mathbb{R}^{n\times m}$ is a $(-1,0,1)$-matrix whose rows are indexed by the nodes and columns are indexed by the edges, where the $(i,k)$th entry is $1$ if node $i$ is the head of $e_k$, $-1$ if node $i$ is the tail of $e_k$, and $0$ otherwise. The signed Laplacian $L$ can then be factorized as
\begin{align}
L=DWD'.\label{factorization}
\end{align}
While $D$ depends on the choice of orientations, $L$ does not.

{\subsection{Small disturbance angle stability of power systems}}
Consider a power network with both synchronous generators and inverter-based generators that exploit renewable energy sources. The interconnection of different generators and loads in the power network can be described by an undirected graph $\mathbb{G}\!=\!(\mathcal{V},\mathcal{E})$ consisting of $n$ nodes and $m$ edges, wherein each node represents a bus and each edge represents a transmission line between two buses. {Assume that the transmission lines are lossless.} We use set $\mathcal{V}_1=\{1,2,\dots,n_1
\}$ to represent all the buses with synchronous generators, and set $\mathcal{V}_2=\{n_1+1,n_1+2,\dots,n\}$ to represent the remaining buses with inverter-based generators or frequency-dependent loads. Let $V_i$ and $\theta_i$ be the voltage magnitude and phase angle of bus $i$. {Also, denote by $-B_{ij}$ the susceptance of transmission line $(i,j)\in\mathcal{E}$. Assume $B_{ij}\!=\!B_{ji}$.} The dynamics of phase angle $\theta_i$ at bus $i$ are given by
\begin{equation*}
\begin{split}
&\dot{\theta}_i=\omega_i,\\
&m_i\dot{\omega}_i+k_i\omega_i+\sum_{(i,j)\in\mathcal{E}}V_iV_jB_{ij}\sin(\theta_i-\theta_j)=p_i,
\end{split}
\end{equation*}
where $m_i\!>\!0$ is the inertia of synchronous generator $i\in\mathcal{V}_1$, and $m_i=0$ for $i\in\mathcal{V}_2$; $k_i\!>\!0$ is the damping coefficient of a synchronous generator for $i\!\in\!\mathcal{V}_1$, and the reciprocal of droop gain of an inverter-based generator or frequency dependence coefficient of a load for $i\in\mathcal{V}_2$; $p_i>0$ denotes the generation at a generator bus, and $p_i<0$ denotes the consumption at a load bus.

Let
$\theta=\begin{bmatrix}\theta_1&\theta_2&\dots&\theta_{n}\end{bmatrix}',\omega=\begin{bmatrix}\omega_1&\omega_2&\dots&\omega_{n_1}\end{bmatrix}'$.
Suppose $\begin{bmatrix}\theta^*\\0\end{bmatrix}$ is an equilibrium point. Then $\begin{bmatrix}\theta^*\\0\end{bmatrix}$ is stable if and only if the Jacobian of the system at this equilibrium point has all the eigenvalues in the open left half plane except for a simple zero eigenvalue \cite{chiang1989closest}. The simple zero eigenvalue has a corresponding eigenvector $\begin{bmatrix}\mathbf{1}_n\\0\end{bmatrix}$, meaning $\begin{bmatrix}\theta^*+k\mathbf{1}_n\\0\end{bmatrix},k\in\mathbb{R}$ represents the same equilibrium point as $\begin{bmatrix}\theta^*\\0\end{bmatrix}$.

{In relation to the study of this paper, it is known that semi-stability of the Jacobian matrix amounts to a Laplacian matrix
$L_{\theta^*}\!=\!DW_{\theta^*}D'$ being positive semidefinite with a simple zero eigenvalue, where $D$ is the incidence matrix of $\mathbb{G}$,
$W_{\theta^*}\!=\!\mathrm{diag}\{w_1,w_2,\dots,w_m\}$ and
$w_k=V_iV_jB_{ij}\cos(\theta^*_i-\theta^*_j)$ for $(i,j)\!=\!e_k$. This is mostly stated in the literature under the assumption that $L_{\theta^*}$ has a simple zero eigenvalue; see for instance \cite{chiang1987foundations,chiang1989closest}. We mention without a detailed proof that the argument is also valid when such an assumption is removed.

In most cases, the transmission lines are inductive \mbox{($B_{ij}\!>\!0$)} and $\theta^*_i-\theta^*_j\!<\!\pi/2$ for all $(i,j)\in\mathcal{E}$. Therefore, all the weights $w_k$ are positive and the semidefiniteness requirement of $L_{\theta^*}$ is automatically satisfied provided that $\mathbb{G}$ is connected. However, there do exist real scenarios where either some lines are capacitive ($B_{ij}\!<\!0$) or the angle differences across some lines exceed $\pi/2$ \cite{motter2013spontaneous,song2017network,ding2017impact}. In either case, the corresponding weights $w_k$ are negative and $L_{\theta^*}$ becomes a signed Laplacian. Hence, characterizing the negative weights under which $L_{\theta^*}$ is positive semidefinite with corank 1 is important in studying the small disturbance angle stability.

The semidefiniteness of $L_{\theta^*}$ is not the only thing that is of interest. In case of an unstable equilibrium point, the inertia of $L_{\theta^*}$ determines the type of the equilibrium point (the number of eigenvalues of the Jacobian matrix with positive real parts), which plays a crucial role in the transient stability analysis of power systems \cite{song2017network,chiang1987foundations}.}

\vspace{10pt}

{\subsection{Feasibility of DC power flow}
Consider again a power network of $n$ buses interconnected by $m$ transmission lines, the topology of which is modeled by $\mathbb{G}\!=\!(\mathcal{V},\mathcal{E})$. Suppose that $\mathbb{G}$ is connected.
The DC power flow is a linearized approximation of the AC power flow, assuming lossless transmission lines, unit voltage magnitudes, and small angle differences. It has been widely utilized in the operations of power systems due to its simplicity and easy computation. The DC power flow equations are as follows:
\begin{align*}
p_i=\sum_{(i,j)\in\mathcal{E}}B_{ij}(\theta_i-\theta_j), \;\;\;i=1,2,\dots,n,
\end{align*}
where we use the same notation as in the previous subsection.
Let $p=\begin{bmatrix}p_1&p_2&\dots&p_n\end{bmatrix}',\theta=\begin{bmatrix}\theta_1&\theta_2&\dots&\theta_n\end{bmatrix}'$. Then the equations can be rewritten into the compact form $p=L\theta$, where $L$ is a Laplacian induced by the weights $B_{ij}$. To ensure the feasibility of DC power flow, $L$ has to have a simple zero eigenvalue. This is indeed the case in the majority of scenarios where $B_{ij}>0$ for all $(i,j)\in\mathcal{E}$.
However, as remarked before, there also exist cases, although not common, where $B_{ij}<0$ for some $(i,j)$. In this case, $L$ is a signed Laplacian and may have multiple zero eigenvalues, rendering the DC power flow infeasible. This has been discussed recently based on analysis of semidefiniteless of $L$ and experimental simulations in \cite{ding2017impact}.
Note that the inertia of $L$ is most essential here. In this paper, we will conduct a detailed study on the inertia of $L$, and thus deepen and generalize the analysis in \cite{ding2017impact}.}

{\subsection{Consensus of multi-agent systems with repelling interactions}}
Consider a multi-agent system consisting of $n$ agents, each of which is modeled as a one-dimensional single integrator
\begin{align*}
\dot{x}_i(t)=u_i(t),i=1,2,\dots,n,
\end{align*}
and interacts with its neighbours over an undirected graph $\mathbb{G}=(\mathcal{V},\mathcal{E})$ via the protocol
\begin{align}
u_i(t)=\sum_{(i,j)\in\mathcal{E}}a_{ij}(x_j-x_i),\label{consensusp}
\end{align}
where $a_{ij}\in\mathbb{R}$. Let $x=\begin{bmatrix}x_1&x_2&\dots&x_n\end{bmatrix}'$. The dynamics of the agents can be re-written into the compact form
$\dot{x}(t)=-Lx(t)$,
where $L$ is the Laplacian matrix associated with $\mathbb{G}$.
The agents are said to reach consensus if their states converge to the same value as time goes to infinity.

When all the weights in $\mathbb{G}$ are positive, it is well known that the agents will reach consensus if and only if $\mathbb{G}$ is connected \cite{Murray2004}. This is due to the property that $L\geq 0$ and has a simple zero eigenvalue with a corresponding eigenvector $\mathbf{1}$ if and only if $\mathbb{G}$ is connected. In addition, if the initial condition $x(0)$ is nonnegative, then $x(t)$ will remain nonnegative for all time. This is because $-L$ is exponentially positive when the graph is connected.

When some of the weights are negative due to the possibly repelling interactions among the agents
 or adversarial attacks on the network, $L$ becomes a signed Laplacian. Nevertheless, by the knowledge of stability of linear systems, consensus can still be reached if and only if the signed Laplacian $L\geq 0$ and has a simple zero eigenvalue. {On the other hand, the
 authors in \cite{altafini2015predictable} showed that when $-L$ is eventually exponentially positive, consensus can be reached. We postpone the formal definition of eventual positivity to Section VI, but raise at this stage the following questions: Is the eventual exponential positivity of $-L$ also necessary for guaranteeing consensus? If so, does it suggest certain equivalence between the semidefiniteness of $L$ and the eventual exponential positivity of $-L$? It turns out that the answers to both questions are ``yes''. The detailed reasoning will be justified in Section VI.

We would like to mention that a different consensus protocol was proposed in \cite{altafini} for multi-agent systems with antagonistic relations modeled by signed graphs:
\begin{align*}
u_i(t)=\sum_{(i,j)\in\mathcal{E}}|a_{ij}|(\mathrm{sgn}(a_{ij})x_j-x_i).
\end{align*}
Numerous works have been reported recently in relation to this protocol; see for instance\cite{proskurnikov2016opinion,xia2016,zhang2017bipartite,liu2017expo,liu2018polarizability}. Under this protocol, the dynamics of the coupled agents can be written compactly into
$\dot{x}(t)=-\mathcal{L}x(t)$,}
where $\mathcal{L}=[\ell_{ij}]$ is a different ``signed Laplacian matrix'' defined as
\begin{align*}
\ell_{ij}=\begin{cases}-a_{ij},&i\neq j,\\ \sum_{j=1,j\neq i}^n |a_{ij}|,& i=j.\end{cases}
\end{align*}
Unlike the definition of $L$ adopted in this paper, each diagonal element of $\mathcal{L}$ is equal to the absolute sums of the off-diagonal elements in that particular row. The spectral properties of $\mathcal{L}$ have much to do with the notion of balance of a signed graph. {It has been shown in \cite{altafini} that under such a protocol, the agents either reach bipartite consensus (some of the agents converge to a common value while others converge to the opposite value) or converge trivially to zero, depending on whether the signed graph is balanced or not.}

Comparing two definitions, one can see that
$\mathcal{L}\!=\!L+H$,
where $H$ is a diagonal matrix with diagonal entries being
$h_{ii}\!=\!2 |\sum_{j=1,j\neq i}^n \min(a_{ij},0)|$.
If we let $\mathbb{G}_{\mathrm{loopy}}$ be a graph obtained from $\mathbb{G}$ by adding self-loops to each node $i$ with weight $h_{ii}$, then $\mathcal{L}$ can be regarded as a loopy signed Laplacian associated with $\mathbb{G}_{\mathrm{loopy}}$. Hence, $\mathcal{L}$ extends the loopy Laplacian introduced in \cite{Florian2013} to signed graphs with self-loops. We note that in the recent paper \cite{SAB}, $L$ and $\mathcal{L}$ were called ``repelling Laplacian'' and ``opposing Laplacian'' respectively.

\vspace{12pt}

\section{{ Preliminaries: Graphs as Electrical Networks}} \label{prel}
For a connected signed graph, one can associate with each edge a resistor whose conductance (possibly negative\footnote{Negative conductance corresponds to active resistor that produces power.}) is given by the edge weight. Such an association of signed graphs with resistive networks endows the signed Laplacian $L$ with a nice physical interpretation. Let $\bfc \in\mathbb{R}^n$ be a vector whose elements denote the amount of current injected into each node by external independent sources. Assume
that the sum of the elements of $\bfc$ is zero, i.e., $\bfc'\mathbf{1}=0$, meaning that there is no current accumulating in the electrical
network. Let $\bfu\in\mathbb{R}^n$ be the vector of resulting electric potentials at the nodes. Ohm's law~\cite{laws} says that the current flowing through each edge is equal to the potential difference multiplied by the conductance. Hence, the vector of currents flowing through all the edges is given by $WD'\bfu$. Furthermore, Kirchhoff's current law~\cite{laws} asserts that the difference between the outgoing currents and incoming currents through the edges adjacent to a node is equal to the external current injection at that node. This leads to the current balance equation
$DWD'\bfu\!=\!\bfc$.
In view of (\ref{factorization}), this equation can be rewritten as
\begin{align}
L\bfu=\bfc.\label{currentbalance}
\end{align}
This means that $L$ captures the linear relationship between the current injections and the resulting electric potentials at the nodes. Due to this physical interpretation, $L$ is also referred to as Kirchhoff matrix in some literature \cite{curtis1998circular}.

Now, suppose that one unit of current is injected into node $i$ and extracted from node $j$. Then, the voltage across nodes $i$ and $j$ defines the {\em effective resistance} between nodes $i$ and $j$, denoted by $r_{\mathrm{eff}}(i,j)$. Also, the voltage across nodes $s$ and $t$ defines the {\em transfer effective resistance} between the node pairs $(i,j)$ and $(s,t)$, denoted by $r_{\mathrm{tran}}((s,t),(i,j))$.

The effective resistance and transfer effective resistance can be obtained through experimental measurements. They are also related to the graph Laplacian $L$ as described below.
When $L$ has a simple zero eigenvalue, solving equation (\ref{currentbalance}) yields
\begin{align*}
\bfu=L^{\dagger}\bfc + a\mathbf{1},
\end{align*}
where $L^{\dagger}$ is the Moore-Penrose pseudoinverse of $L$ and $a$ is an arbitrary real number.
Let $\bfc=d_{ij}$. By definition,
\begin{align}
r_{\mathrm{eff}}(i,j)&=d'_{ij}L^{\dagger}d_{ij},\label{eff}\\
r_{\mathrm{tran}}((s,t),(i,j))&=d_{st}'L^{\dagger}d_{ij}.\label{tran}
\end{align}
Since $L^{\dagger}$ is symmetric, we have
\begin{align*}
r_{\mathrm{tran}}((s,t),(i,j))=r_{\mathrm{tran}}((i,j),(s,t)).
\end{align*}
When all the edge weights are positive, it has been shown that the effective resistance serves as a distance function in the node set of a weighted graph\cite{KleRan93}. Recently, some preliminary results on distributed computation of effective resistances have been reported in \cite{Aybat2017DecentralizedCO}.

Note that $r_{\mathrm{eff}}(i,j)$ can be defined for arbitrary two nodes $i$ and $j$, regardless of whether they are connected by an edge or not. Similarly, $r_{\mathrm{tran}}((s,t),(i,j))$ can be defined for arbitrary two node pairs $(i,j)$ and $(s,t)$. If nodes $i$ and $j$ happen to be the head and tail of an edge $e_k$, then $d_{ij}$ coincides with the $k$th column of the incidence matrix $D$. In light of (\ref{eff}) and (\ref{tran}), the effective resistances of the edges along with the transfer effective resistances between different edges can be expressed in a compact matrix form
$D'L^{\dagger}D$.

Using the terminology prevalent in circuit theory \cite{BaoLee07}, we say a resistive network is passive if $\bfu'\bfc \!\geq \!0$, and strictly passive if $\bfu'\bfc \!>\! 0$ for all $\bfc'\mathbf{1}\!=\!0,\bfc\!\neq \!0$.
It is well known that a resistive network with only positive resistances is passive, and is strictly passive if, in addition, the underlying graph is connected. Since the resistive network associated with a signed graph may have both positive and negative resistances, a natural question is: How can we characterize the passivity of the network in terms of the negative resistances? In view of (\ref{currentbalance}), this is a rephrasing of the first question raised in Section II.A using the language of circuits.

\subsection{The Kron reduction}
Consider a resistive electrical network of $n$ nodes with the associated Kirchhoff matrix (signed Laplacian matrix) given by $L$. In many applications, it may happen that only the current balance on a subset of nodes is of importance. These nodes are called external terminals, and the remaning nodes are called interior terminals. Denote by $\alpha\subsetneq\{1,2,\dots,n\}$, $|\alpha|\geq 2$, the set of external terminals, and by $\beta=\{1,2,\dots,n\}\backslash \alpha$ the set of interior terminals. By an appropriate labeling of nodes, we can make the first $|\alpha|$ rows of $L$ correspond to those external terminals. Then, $L$ admits the partition
\begin{align}
L=\begin{bmatrix}L_{\alpha\alpha}&L_{\alpha\beta}\\L_{\beta\alpha}&L_{\beta\beta}\end{bmatrix}.\label{kronpartition}
\end{align}
Consequently, the current balance equation (\ref{currentbalance}) can be rewritten as
\begin{align*}
\begin{bmatrix}\bfc_\alpha\\\bfc_\beta\end{bmatrix}=\begin{bmatrix}L_{\alpha\alpha}&L_{\alpha\beta}\\L_{\beta\alpha}&L_{\beta\beta}\end{bmatrix}\begin{bmatrix}\bfu_\alpha\\\bfu_\beta\end{bmatrix}.
\end{align*}
Applying Gaussian elimination on the interior voltages $\bfu_\beta$, and letting the current injections into the interior terminals $\bfc_\beta\!=\!0$, yields
$\bfc_\alpha=L_{\mathrm{r}}\bfu_\alpha$,
where
\begin{align}
L_{\mathrm{r}}=L_{\alpha\alpha}-L_{\alpha\beta}L^{\dagger}_{\beta\beta}L_{\beta\alpha}.\label{Lred}
\end{align}
One can see that $L_{\mathrm{r}}$ is simply the Schur complement of $L_{\beta\beta}$ in $L$. It has been shown in \cite{Florian2013} that $L_{\mathrm{r}}$ is also a Laplacian and, therefore, is called a reduced signed Laplacian. A reduced signed Laplacian $L_{\mathrm{r}}$ corresponds to a reduced signed graph $\mathbb{G}_{\mathrm{r}}$ involving only all the external terminals. Such a process of obtaining a lower-dimensional reduced network which has the same current balance equation on the external terminals as the original one is called Kron reduction. The Kron reduction process preserves connectivity, i.e., $\mathbb{G}_{\mathrm{r}}$ is connected if and only if $\mathbb{G}$ is connected \cite{Florian2013}.

The Kron reduction can be regarded as a form of abstraction of electrical networks by considering only the current balance at the external terminals. Hereinafter, by saying we perform the Kron reduction on a signed graph $\mathbb{G}$, we mean performing the Kron reduction on the resistive network \mbox{associated with $\mathbb{G}$}.

\subsection{$n$-port network}
Another frequently used abstraction of electrical networks is $n$-port (multiport) networks, which enables capturing behaviours at selected pairs of external terminals while putting the internal structure of the network into a black box. Some preliminaries on $n$-port network theory are given below. See \cite{Anderson73} and the references therein for more details.

As depicted in {Figure \ref{nport}}, an $n$-port network is an electrical network whose external terminals are grouped into $n$ pairs such that for every pair of terminals, the current flowing into one terminal equals the current flowing out of the other. Such pairs of terminals are called ports. Note that different ports are allowed to share a common terminal.
The external behavior of an $n$-port network is completely determined by the port voltages $\bfv_1,\bfv_2,\dots,\bfv_n$, and port currents $\bfi_1,\bfi_2,\dots,\bfi_n$.
Let $\bfv=\begin{bmatrix}\bfv_1&\bfv_2&\dots&\bfv_n\end{bmatrix}'$ and $\bfi=\begin{bmatrix}\bfi_1&\bfi_2&\dots&\bfi_n\end{bmatrix}'$.

\begin{figure}[h!]
\begin{center}
\includegraphics[scale=0.47]{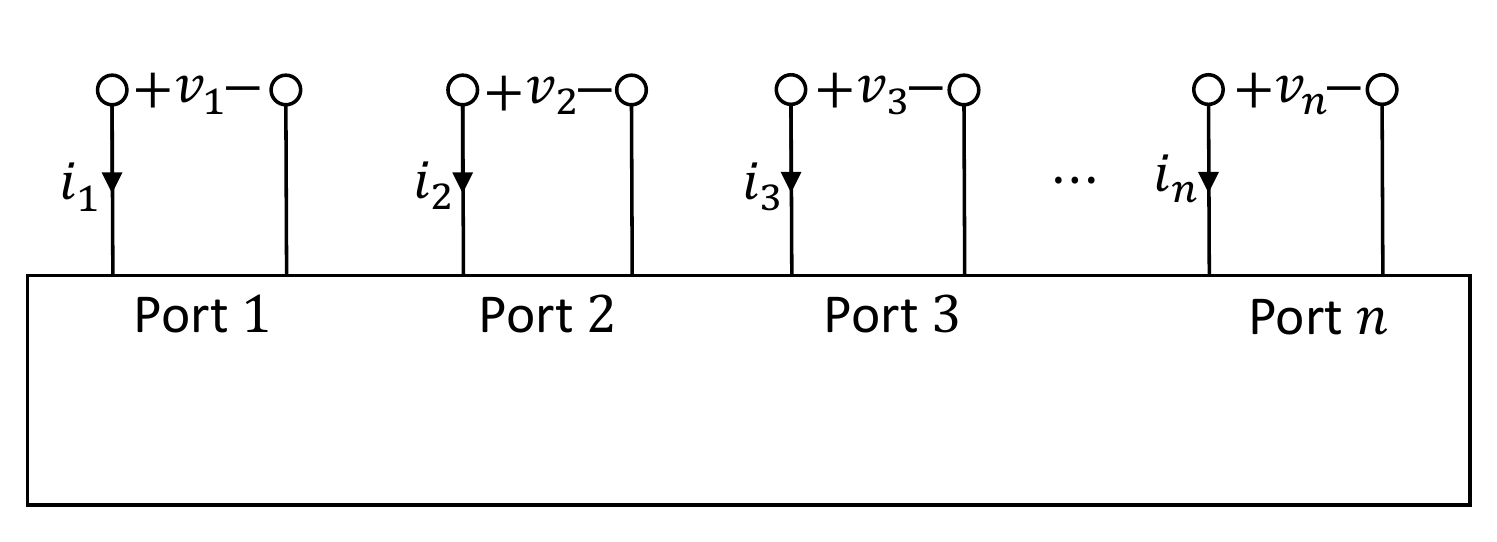}
\vspace{-5pt}
\caption{An $n$-port network.}\label{nport}
\vspace{-5pt}
\end{center}
\end{figure}

Given a resistive $n$-port network, we can express the port voltages in terms of the port currents as $\bfv=Z\bfi$, or express the port currents in terms of the port voltages as $\bfi=Y\bfv$. The matrices $Z$ and $Y$ are symmetric and called the resistance matrix and conductance matrix, respectively.
Each diagonal element of matrix $Z$ ($Y$) represents the resistance (conductance) over a port. Each off-diagonal element of matrix $Z$ ($Y$) represents the transfer resistance (transfer conductance) from one port to another. An $n$-port resistive network is strictly passive if and only if $\bfv'\bfi\!>\!0$ for all nonzero $\bfi$. Therefore, both the resistance matrix and the conductance matrix of a strictly passive $n$-port network are positive definite.

For the resistive network associated with a signed graph, one can select pairs of external terminals as needed and treat the whole network as a multiport. In that case, the resistance over a port is nothing but the effective resistance between the two terminals of the port. The transfer resistance from one port to another coincides with the notion of transfer effective resistance introduced before.

{Note that the resistance matrix $Z$ may not be well-defined because of possibly infinite resistance across certain ports. The infinite resistance may occur due to the dis-connectivity of the network or the presence of negative resistances.} Nevertheless, the conductance matrix $Y$ is always well-defined, since infinite resistance simply means zero conductance. When both $Z$ and $Y$ are well-defined, they are the inverse of each other.

In what follows, we introduce several connections between $n$-port networks and their associated matrix operations.
For an $n$-port network \textit{\textbf{A}}, we use superscript $a$ when denoting its corresponding quantities to differentiate from those of other networks, e.g., $\bfi^a,\bfv^a,Z^a,Y^a$.

First, given an $n$-port network \textit{\textbf{A}}, we can partition the ports into two groups: the first $r$ ports and the remaining $(n-r)$ ports. The resistance matrix and conductance matrix of \textit{\textbf{A}} can be partitioned with compatible dimensions as
\begin{align*}
Z^a=\begin{bmatrix}Z^a_{11}&Z^a_{12}\\Z^a_{21}&Z^a_{22}\end{bmatrix},\quad Y^a=\begin{bmatrix}Y^a_{11}&Y^a_{12}\\Y^a_{21}&Y^a_{22}\end{bmatrix}.
\end{align*}
If we leave the first $r$ ports of network \textit{\textbf{A}} open circuited, i.e., forcing $\bfi^a_{1}=\bfi^a_{2}=\dots=\bfi^a_{r}=0$, we end up with an $(n\!-r)$-port network $\textit{\textbf{C}}$ whose resistance and conductance matrix are
\begin{align}
Z^c=Z^a_{22}\text { and } Y^c=Y^a\slash_{11}.\label{opencircuit}
\end{align}
Instead, if we short the first $r$ ports of \textit{\textbf{A}}, i.e., forcing $\bfv^a_{1}=\bfv^a_{2}=\dots=\bfv^a_{r}=0$, we will end up with another $(n-r)$-port network \textit{\textbf{D}} whose resistance and conductance matrix are
\begin{align}
Z^d=Z^a\slash_{11}\text { and } Y^d=Y^a_{22}.\label{shortconnection}
\end{align}

Now, consider two $n$-port networks \textit{\textbf{A}} and \textit{\textbf{B}}. Let \textit{\textbf{E}} be an $n$-port network obtained by a parallel connection of \textit{\textbf{A}} and \textit{\textbf{B}} as shown in {Figure \ref{parallel}}. Then, the resistance and conductance matrix of \textit{\textbf{E}} are related to those of \textit{\textbf{A}} and  \textit{\textbf{B}} by
\begin{align}
Z^e=\left({Z^a}^\dagger+{Z^b}^\dagger\right)^{\dagger} \text{ and }Y^e=Y^a+Y^b.\label{parallelconnection}
\end{align}

\begin{figure}[htbp]
\begin{center}
\includegraphics[scale=0.45]{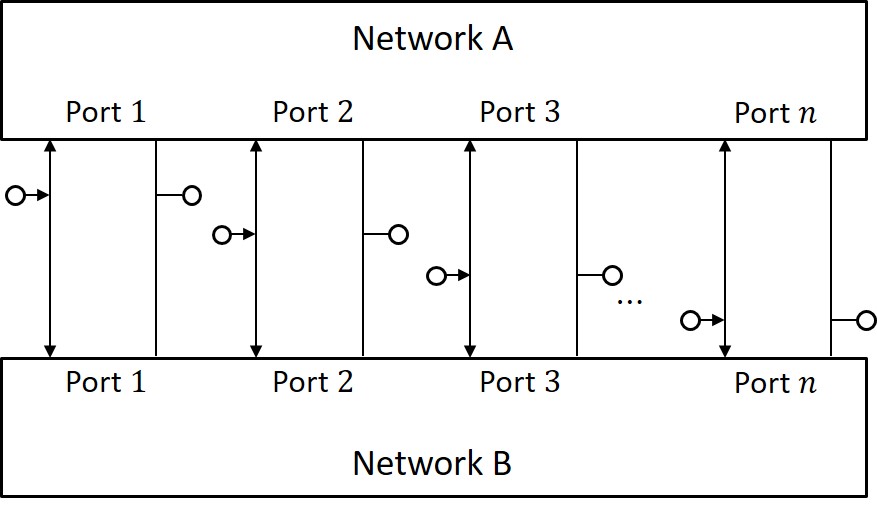}
\caption{Parallel connection of $n$-port networks \textit{\textbf{A}} and \textit{\textbf{B}}.}
\label{parallel}
\end{center}
\end{figure}


\section{Semidefinite Signed Laplacians} \label{semidefiniteness}
\subsection{Semidefiniteness and Kron reduction}
Consider a signed graph $\mathbb{G}\!=\!(\mathcal{V},\mathcal{E})$ with the corresponding signed Laplacian $L$. We treat the nodes incident to negatively weighted edges as external terminals and the remaining nodes as interior terminals. Denote the set of external terminals and the set of interior terminals by $\alpha$ and $\beta$, respectively.

Applying the Kron reduction on $\mathbb{G}$ yields a reduced signed graph $\mathbb{G}_{\mathrm{r}}$ and an associated reduced signed Laplacian $L_{\mathrm{r}}$ as in (\ref{Lred}). Since the Kron reduction preserves connectivity, $\mathbb{G}_{\mathrm{r}}$ is connected if and only if $\mathbb{G}$ is connected. The edges connecting the external terminals in $\mathbb{G}$ remain in $\mathbb{G}_{\mathrm{r}}$. However, many new edges emerge in $\mathbb{G}_{\mathrm{r}}$ due to the reduction process.

In applications, negative weights often appear due to disturbances, faults, or adversarial attacks. Hence, it is reasonable to expect that the number of external terminals, i.e., $|\alpha|$, is much smaller than the size of the graph and, thus, $L_{\mathrm{r}}$ has a much lower dimension than $L$.

The following theorem establishes a connection between the positive semidefiniteness of $L$ and $L_{\mathrm{r}}$.

\begin{theorem}\label{pskron}
For a given signed Laplacian $L\in\mathbb{R}^{n\times n}$, the following statements are equivalent:
\begin{enumerate}[(a)]
\item $L\geq0$ and $\mathrm{corank}(L)=1$;
\item $L_{\mathrm{r}}\geq0$ and $\mathrm{corank}(L_{\mathrm{r}})=1$.
\end{enumerate}
\end{theorem}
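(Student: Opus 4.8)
The plan is to pass from $L$ to $L_{\mathrm{r}}$ by a single congruence transformation and to read off the conclusion from Sylvester's law of inertia. The one thing that must be checked before this works is that the interior block $L_{\beta\beta}$ is positive definite, so that $L_{\beta\beta}^{\dagger}=L_{\beta\beta}^{-1}$ and $L_{\mathrm{r}}=L_{\alpha\alpha}-L_{\alpha\beta}L_{\beta\beta}^{-1}L_{\beta\alpha}$ is an ordinary Schur complement. (If $\beta=\emptyset$ then $L_{\mathrm{r}}=L$ and there is nothing to prove, so assume $\beta\neq\emptyset$, hence also $\alpha\neq\emptyset$ since $\alpha$ contains the two endpoints of at least one negatively weighted edge.)

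First I would establish that each of (a) and (b) forces $\mathbb{G}$ to be connected. For any signed Laplacian the row sums vanish, so the indicator vector of every connected component lies in the kernel; a disconnected graph therefore has corank at least $2$. Thus $\mathrm{corank}(L)=1$ already gives that $\mathbb{G}$ is connected, while $\mathrm{corank}(L_{\mathrm{r}})=1$ gives that $\mathbb{G}_{\mathrm{r}}$ is connected, whence $\mathbb{G}$ is connected because the Kron reduction preserves connectivity \cite{Florian2013}. So in proving either implication we may assume $\mathbb{G}$ connected.

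Next I would show $L_{\beta\beta}>0$. By definition $\beta$ consists of the nodes not incident to any negatively weighted edge, so every edge meeting $\beta$ carries a positive weight; consequently $L_{\beta\beta}$ coincides with the corresponding principal submatrix of the conventional Laplacian $L^{+}$ of the subgraph of $\mathbb{G}$ formed by its positively weighted edges. Deleting the rows and columns indexed by $\alpha$ from $L^{+}$ yields a grounded Laplacian, and I would argue it is positive definite by checking that $\alpha$ meets every connected component of that positive subgraph: a component $K$ disjoint from $\alpha$ would lie entirely in $\beta$, and since $\beta$-nodes have only positively weighted edges no edge of $\mathbb{G}$ would leave $K$, so $K$ would be a union of components of $\mathbb{G}$ and hence, as $\mathbb{G}$ is connected and $\alpha\neq\emptyset$, all of $\mathcal{V}$ — impossible. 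Therefore $L_{\beta\beta}>0$.

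Finally, with $L_{\beta\beta}$ invertible the block congruence
\[
\begin{bmatrix}I & -L_{\alpha\beta}L_{\beta\beta}^{-1}\\ 0 & I\end{bmatrix}
L
\begin{bmatrix}I & 0\\ -L_{\beta\beta}^{-1}L_{\beta\alpha} & I\end{bmatrix}
=\begin{bmatrix}L_{\mathrm{r}} & 0\\ 0 & L_{\beta\beta}\end{bmatrix}
\]
combined with Sylvester's law of inertia gives $\pi_{\bullet}(L)=\pi_{\bullet}(L_{\mathrm{r}})+\pi_{\bullet}(L_{\beta\beta})$ for $\bullet\in\{-,0,+\}$. Since $L_{\beta\beta}>0$, this yields $\pi_{-}(L)=\pi_{-}(L_{\mathrm{r}})$ and $\pi_{0}(L)=\pi_{0}(L_{\mathrm{r}})$; the first identity says $L\geq0\iff L_{\mathrm{r}}\geq0$ and the second says $\mathrm{corank}(L)=\mathrm{corank}(L_{\mathrm{r}})$, which together establish (a)$\iff$(b). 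The main obstacle is the positive definiteness of $L_{\beta\beta}$: everything after it is standard Haynsworth--Sylvester bookkeeping, but that step requires noticing that the interior block is genuinely a grounded \emph{conventional} Laplacian and that connectedness of $\mathbb{G}$ — which is exactly what makes each component of the positive subgraph touch $\alpha$ — is handed to us for free by the corank hypothesis on whichever side of the equivalence we start from.
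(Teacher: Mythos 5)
Your proof is correct and follows essentially the same route as the paper: show $L_{\beta\beta}>0$, then transfer inertia between $L$ and $\mathrm{diag}(L_{\mathrm{r}},L_{\beta\beta})$ via the block congruence and Sylvester's law. The only (harmless) deviations are that you obtain connectedness of $\mathbb{G}$ directly from the corank hypothesis (component indicators lie in the kernel), where the paper invokes an eigenvalue-monotonicity lemma to conclude the stronger fact that $\mathbb{G}_+$ is connected, and that you prove $L_{\beta\beta}>0$ by a self-contained grounded-Laplacian argument rather than citing Horn--Johnson.
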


This theorem suggests that in dealing with a large network under sparse perturbations on the negatively weighted edges, only behaviors at a small number of external terminals are relevant to the positive semidefiniteness of $L$.

Before formally proving this theorem, we introduce some more notation and useful lemmas.

Denote by $\mathbb{G}_+$ ($\mathbb{G}_-$, respectively) the spanning subgraph of $\mathbb{G}$ with only the positively weighted edges (\mbox{only the negatively} weighted edges, respectively). Also, denote by $L_+$ ($L_-$, respectively) the corresponding Laplacian matrix. Clearly,
\begin{align*}
L=L_+ +L_-.
\end{align*}
As in Section III.A, we relabel the nodes in a way such that the first $|\alpha|$ rows of $L$ correspond to all the external terminals and the rest of the rows correspond to all the interior terminals. Then, $L$ can be partitioned as in (\ref{kronpartition}), and $L_+$ and $L_-$ can be partitioned accordingly as
\begin{align*}
L_+=\begin{bmatrix}L_{+_{\alpha\alpha}}&L_{+_{\alpha\beta}}\\L_{+_{\beta\alpha}}&L_{+_{\beta\beta}}\end{bmatrix},\quad
L_-=\begin{bmatrix}L_{-_{\alpha\alpha}}&0\\0&0\end{bmatrix}.
\end{align*}

\begin{lemma}[\!\!\cite{ChenCDC16}]
\label{mono}
The spectrum of a signed Laplacian $L$ is monotonically increasing with respect to each edge weight of the underlying signed graph $\mathbb{G}$.
\end{lemma}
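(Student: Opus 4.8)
The plan is to exploit the rank-one structure of the edge contributions in the factorization $L=DWD'$, together with the standard monotonicity of the eigenvalues of a real symmetric matrix under positive semidefinite perturbations (the Courant--Fischer / Weyl inequalities). Throughout, ``the spectrum is monotonically increasing with respect to an edge weight'' is understood as: each eigenvalue of $L$, listed in nondecreasing order, is a nondecreasing function of that weight.

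First, fix an edge $e_k=(i,j)$ with weight $w_k=a_{ij}$. From (\ref{factorization}) we have $L=DWD'=\sum_{\ell=1}^{m} w_\ell\, D_{:,\ell}D_{:,\ell}'$, where $D_{:,\ell}$ denotes the $\ell$th column of $D$. For the chosen edge, $D_{:,k}=\pm d_{ij}$ according to the chosen orientation, so $D_{:,k}D_{:,k}'=d_{ij}d_{ij}'$ regardless of that orientation and regardless of the sign of $w_k$. Hence, if we replace $w_k$ by $w_k+\delta$ with $\delta>0$ and keep all other weights unchanged, the resulting signed Laplacian is
\begin{align*}
\tilde L = L + \delta\, d_{ij}d_{ij}',
\end{align*}
and the perturbation $\delta\, d_{ij}d_{ij}'$ is a rank-one positive semidefinite matrix, so $\tilde L - L \ge 0$, i.e., $\tilde L \ge L$ in the positive semidefinite (Loewner) order.

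Second, invoke the monotonicity principle: for real symmetric $A,B\in\mathbb{R}^{n\times n}$ with $B-A\ge 0$, the eigenvalues in nondecreasing order satisfy $\lambda_j(B)\ge\lambda_j(A)$ for every $j=1,\dots,n$. This is immediate from the Courant--Fischer characterization $\lambda_j(A)=\min_{\dim\mathcal{S}=j}\,\max_{0\ne x\in\mathcal{S}}\frac{x'Ax}{x'x}$: since $x'Bx\ge x'Ax$ for all $x$, every inner maximum, and hence the outer minimum, does not decrease. Taking $A=L$ and $B=\tilde L$ gives $\lambda_j(\tilde L)\ge\lambda_j(L)$ for all $j$, which is exactly the claimed monotonicity in the weight $w_k$. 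As $e_k$ was arbitrary, the statement holds with respect to every edge weight; monotone nonincrease under a decrease of the weight follows by symmetry, exchanging the roles of $L$ and $\tilde L$.

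There is essentially no substantive obstacle here; the only points deserving a moment's care are (i) observing that the relevant column of the incidence matrix contributes the orientation-independent rank-one term $d_{ij}d_{ij}'$, and (ii) noting that ``increasing the weight'' is meant algebraically, so that for a negatively weighted edge, moving $w_k$ toward (and past) zero still counts as an increase --- the argument covers this uniformly because only the sign of $\delta$, not the sign of $w_k$, enters. (One could instead argue via the derivative $\partial L/\partial w_k = d_{ij}d_{ij}'\ge 0$, but eigenvalues may fail to be differentiable at crossings, so the Loewner-order argument above is cleaner.)
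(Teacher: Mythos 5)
Your proof is correct and is the standard argument: the paper itself defers the proof to the cited conference paper \cite{ChenCDC16}, where the same reasoning is used --- the derivative of $L$ in the weight $w_k$ is the orientation-independent rank-one positive semidefinite matrix $d_{ij}d_{ij}'$, and Weyl's monotonicity theorem (Courant--Fischer) then gives that every ordered eigenvalue is nondecreasing in $w_k$. Nothing is missing; your remark that the sign of $w_k$ is irrelevant (only the sign of the increment $\delta$ matters) is exactly the point that makes the lemma applicable to signed Laplacians, as used in the proof of Lemma~\ref{connectivity}.
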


\begin{lemma}
\label{connectivity}
Given a signed graph $\mathbb{G}$ and its associated signed Laplacian $L$, if $L\geq 0$ with $\mathrm{corank}(L)=1$, then $\mathbb{G}_+$ is connected.
\end{lemma}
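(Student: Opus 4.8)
\emph{Proof proposal.} The plan is to argue the contrapositive: I will show that if $\mathbb{G}_+$ is disconnected, then $L$ cannot simultaneously be positive semidefinite and have corank one. So suppose $\mathbb{G}_+$ has connected components $\mathcal{C}_1,\dots,\mathcal{C}_k$ with $k\geq 2$, and for each $i$ let $\mathbf{1}_{\mathcal{C}_i}\in\mathbb{R}^n$ be the indicator vector of $\mathcal{C}_i$. The first thing I would record is the following: since $L_+$ is an ordinary (nonnegatively weighted) Laplacian whose underlying graph has $k$ components, its kernel is spanned by the component indicators, so $L_+\mathbf{1}_{\mathcal{C}_i}=0$ for every $i$; and since $\mathbb{G}_-$ carries only negative weights, its Laplacian $L_-=D_-W_-D_-'$ (the restriction of the factorization \eqref{factorization} to the negatively weighted edges, with $W_-$ diagonal and negative) satisfies $x'L_-x\leq 0$ for all $x$, i.e.\ $L_-\leq 0$.

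Next I would evaluate the quadratic form of $L=L_++L_-$ at $x=\mathbf{1}_{\mathcal{C}_i}$: using the two facts above, $\mathbf{1}_{\mathcal{C}_i}'L\,\mathbf{1}_{\mathcal{C}_i}=\mathbf{1}_{\mathcal{C}_i}'L_+\mathbf{1}_{\mathcal{C}_i}+\mathbf{1}_{\mathcal{C}_i}'L_-\mathbf{1}_{\mathcal{C}_i}=\mathbf{1}_{\mathcal{C}_i}'L_-\mathbf{1}_{\mathcal{C}_i}\leq 0$. If $L\geq 0$, this forces $\mathbf{1}_{\mathcal{C}_i}'L\,\mathbf{1}_{\mathcal{C}_i}=0$, and because a positive semidefinite matrix annihilates any vector on which its quadratic form vanishes (write $L=L^{1/2}L^{1/2}$, or use Cauchy--Schwarz), this gives $L\,\mathbf{1}_{\mathcal{C}_i}=0$. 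Hence $\mathbf{1}_{\mathcal{C}_1},\dots,\mathbf{1}_{\mathcal{C}_k}\in\kernel{L}$; these vectors have pairwise disjoint supports and are therefore linearly independent, so $\mathrm{corank}(L)\geq k\geq 2$, contradicting the hypothesis that $\mathrm{corank}(L)=1$. This completes the contrapositive and hence the lemma.

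I do not expect a serious obstacle here. The only step needing a word of justification is the implication ``$x'Lx=0$ and $L\geq 0$'' $\Rightarrow$ ``$Lx=0$''; everything else is bookkeeping about which subgraph the component indicators live in. An alternative route, closer in spirit to the surrounding development, is to invoke Lemma~\ref{mono}: since $L_+$ is obtained from $L$ by raising each negative weight to zero, monotonicity of the spectrum yields $\lambda_j(L)\leq\lambda_j(L_+)$ for the eigenvalues listed in increasing order; a disconnected $\mathbb{G}_+$ makes the $k$ smallest eigenvalues of $L_+$ equal to zero, so the $k$ smallest eigenvalues of $L$ are $\leq 0$, and $L\geq 0$ then pins them all to zero, again giving $\mathrm{corank}(L)\geq k\geq 2$. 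Either version is short; I would present the quadratic-form argument as primary since it is self-contained.
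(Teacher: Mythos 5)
Your primary (quadratic-form) argument is correct but takes a genuinely different route from the paper. The paper argues by contradiction using Lemma~\ref{mono}: if $\mathbb{G}_+$ is disconnected then $L_+$ has multiple zero eigenvalues, and since each weight of $\mathbb{G}$ is at most the corresponding weight of $\mathbb{G}_+$, eigenvalue monotonicity gives $\lambda_i(L)\leq\lambda_i(L_+)$ for all $i$, so $L$ must have either multiple zero eigenvalues or a negative eigenvalue, contradicting the hypotheses. This is exactly the ``alternative route'' you sketch at the end, so you have in effect reproduced the paper's proof as your secondary option. Your primary argument instead tests the quadratic form of $L=L_++L_-$ on the component indicators $\mathbf{1}_{\mathcal{C}_i}$, uses $L_-\leq 0$ and the standard fact that $x'Lx=0$ with $L\geq 0$ forces $Lx=0$, and exhibits $k$ linearly independent kernel vectors. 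Both are sound; the paper's version is shorter given that Lemma~\ref{mono} is already on the table, while yours is self-contained (it does not need the spectral monotonicity lemma) and is slightly more informative in that it explicitly identifies the component indicators as elements of $\kernel{L}$ when semidefiniteness holds but connectivity fails.
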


\begin{proof}
We prove the lemma by contradiction. Suppose $\mathbb{G}_+$ is not connected; then $L_+$ has multiple zero eigenvalues. {Since $\mathbb{G}=\mathbb{G}_++\mathbb{G}_-$, the edge weights of $\mathbb{G}$ are smaller than or equal to the corresponding weights of $\mathbb{G}_+$. By \mbox{Lemma \ref{mono}}, $\lambda_i(L)\leq \lambda_i(L_+),i=1,2,\dots,n$, where $\lambda_i(L)$ and $\lambda_i(L_+)$ are respectively the $i$th eigenvalues of $L$ and $L_+$, both ordered non-decreasingly.
This means that $L$ must have multiple zero eigenvalues or negative eigenvalues or both. However, $L$ having multiple zero eigenvalues contradicts with $\mathrm{corank}(L)=1$, while $L$ having negative eigenvalues contradicts with $L\geq 0$. This completes the proof.}
\end{proof}

\begin{lemma}
\label{subpo}
Given a signed graph $\mathbb{G}$ and its associated signed Laplacian $L$, if $\mathbb{G}$ is connected, then $L_{\beta\beta}>0$.
\end{lemma}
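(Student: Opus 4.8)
\textbf{Proof proposal for Lemma \ref{subpo}.}

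The plan is to show that $L_{\beta\beta}$, being a principal submatrix of the Laplacian $L_+$ restricted to the interior terminals $\beta$, is positive definite whenever $\mathbb{G}$ is connected. The key observation is that, by the partition introduced just before the lemma, the negative part $L_-$ vanishes on the block indexed by $\beta$: we have $(L_-)_{\beta\beta}=0$, and hence $L_{\beta\beta}=(L_+)_{\beta\beta}$. So the claim is really a statement about the positive subgraph $\mathbb{G}_+$ alone, namely that deleting the external-terminal rows and columns from the (conventional, nonnegatively weighted) Laplacian $L_+$ leaves a positive definite matrix. This is a classical grounded-Laplacian fact, but I would give the argument in full for completeness.

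First I would recall that $L_+ = D_+ W_+ D_+'$ with $W_+ > 0$, so $L_+ \ge 0$, and therefore $L_{\beta\beta}=(L_+)_{\beta\beta}\ge 0$ as a principal submatrix of a positive semidefinite matrix. It remains to rule out a zero eigenvalue. Suppose $x \in \mathbb{R}^{|\beta|}$ satisfies $x' L_{\beta\beta} x = 0$; extend $x$ to $\tilde x \in \mathbb{R}^n$ by setting the entries on $\alpha$ equal to zero. Then $\tilde x' L_+ \tilde x = x' L_{\beta\beta} x = 0$, so $\tilde x$ lies in the kernel of $L_+$, i.e. $L_+ \tilde x = 0$. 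Writing $\tilde x' L_+ \tilde x = \sum_{(i,j)\in\mathcal{E}_+} w_{ij}(\tilde x_i - \tilde x_j)^2$ with all $w_{ij}>0$ forces $\tilde x_i = \tilde x_j$ for every positively weighted edge. Since $\mathbb{G}$ is connected, $\mathbb{G}_+$ need not itself be connected — but here I can argue more directly: connectivity of $\mathbb{G}$ and the fact that every edge of $\mathbb{G}$ lies in $\mathbb{G}_+$ or has at least one endpoint in $\alpha$ (by the very definition of $\alpha$ as the set of nodes incident to negatively weighted edges) means that $\tilde x$ is constant on each connected component of $\mathbb{G}_+$, and since $\alpha\ne\emptyset$ is nonempty and touches every component that contains a negative edge. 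The cleaner route is: $\tilde x$ is constant on each component of $\mathbb{G}_+$; each component of $\mathbb{G}_+$ that contains no node of $\alpha$ would, together with connectivity of $\mathbb{G}$, have to be all of $\mathbb{G}$ — impossible since $\alpha\ne\emptyset$. Hence every component of $\mathbb{G}_+$ contains a node of $\alpha$, on which $\tilde x$ vanishes, so $\tilde x = 0$, whence $x = 0$. Therefore $L_{\beta\beta}>0$.

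I would streamline the component argument as follows to avoid the slight awkwardness above. Let $\mathbb{G}_{\ge}$ denote $\mathbb{G}$ with every negatively weighted edge replaced by a positively weighted one (say unit weight); then $\mathbb{G}_{\ge}$ has the same node set and the same connectivity as $\mathbb{G}$, and $\mathbb{G}_+$ is obtained from $\mathbb{G}_{\ge}$ by deleting exactly the edges incident to $\alpha$ that carried negative weight in $\mathbb{G}$. Thus any path in $\mathbb{G}_{\ge}$ between two nodes of $\beta$ either survives in $\mathbb{G}_+$ or passes through a node of $\alpha$. Combined with $L_+\tilde x = 0$ forcing $\tilde x$ constant along $\mathbb{G}_+$-edges and $\tilde x|_\alpha = 0$, connectivity of $\mathbb{G}_{\ge}$ gives $\tilde x \equiv 0$. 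The main (and only genuine) obstacle is getting this connectivity bookkeeping exactly right: one must use crucially that $\alpha$ is precisely the set of endpoints of negative edges, so that $\mathbb{G}_+$ and $\mathbb{G}$ differ only through edges touching $\alpha$, and that $\alpha\ne\emptyset$ whenever $\mathbb{G}_-$ is nonempty (if $\mathbb{G}_-$ is empty the lemma is the standard grounded-Laplacian statement and $\beta$ may be taken empty or the claim is vacuous/immediate). Everything else is the routine positive-semidefinite-quadratic-form computation, which I would not expand further.
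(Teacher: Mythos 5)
Your proof is correct, but it takes a different route from the paper's. The paper also starts from the observation that $(L_-)_{\beta\beta}=0$, so $L_{\beta\beta}=L_{+_{\beta\beta}}$; it then forms the all-positive-weight graph $\tilde{\mathbb{G}}$ obtained by negating the negative weights (your $\mathbb{G}_{\ge}$, essentially), notes that $\tilde{\mathbb{G}}$ is connected and that $L_{\beta\beta}$ is a proper principal submatrix of its conventional Laplacian $\tilde{L}$, and finishes in one line by citing Corollary~6.2.27 of Horn and Johnson. You instead prove the grounded-Laplacian fact from scratch via the quadratic form: extend a null vector of $L_{\beta\beta}$ by zero on $\alpha$, deduce from $\sum w_{ij}(\tilde x_i-\tilde x_j)^2=0$ that $\tilde x$ is constant on each component of $\mathbb{G}_+$, and then argue that every component of $\mathbb{G}_+$ meets $\alpha$ (since $\mathbb{G}$ and $\mathbb{G}_+$ differ only through edges whose endpoints lie in $\alpha$, a component of $\mathbb{G}_+$ disjoint from $\alpha$ would be disconnected from the rest of $\mathbb{G}$). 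This component-counting step is the genuine content you supply in place of the citation, and it is handled correctly; the paper's $\tilde{\mathbb{G}}$ device exists precisely to sidestep the fact that $\mathbb{G}_+$ itself need not be connected, which you instead confront directly. Your approach is more self-contained; the paper's is shorter. One small point in your favor: you explicitly flag the degenerate case $\alpha=\emptyset$ (where $L_{\beta\beta}=L$ and the claim fails because $\mathbf{1}\in\ker L$), which both proofs implicitly exclude since $\alpha$ is nonempty whenever there is a negative edge.
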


\begin{proof}
Based on the given signed graph $\mathbb{G}$, we construct a graph $\tilde{\mathbb{G}}$ by negating all the negative edge weights in $\mathbb{G}$ while keeping all the positive edge weights in $\mathbb{G}$ unchanged. Then, $\tilde{\mathbb{G}}$ is a graph with only positive weights. Denote the Laplacian of $\tilde{\mathbb{G}}$ by $\tilde{L}$. Then,
\begin{align*}
\tilde{L}=L_+-L_-=\begin{bmatrix}L_{+_{\alpha\alpha}}-L_{-_{\alpha\alpha}}&L_{+_{\alpha\beta}}\\L_{+_{\beta\alpha}}&L_{+_{\beta\beta}}\end{bmatrix}.
\end{align*}
In view of Corollary 6.2.27 in \cite{HorJoh85}, we have $L_{+_{\beta\beta}}=L_{\beta\beta}>0$ which completes the proof.
\end{proof}

Now we present the proof of Theorem \ref{pskron}.

\textit{Proof of Theorem \ref{pskron}:}
We first show that statement (a) implies statement (b). By Lemma \ref{connectivity}, if $L\!\geq\! 0$ with $\mathrm{corank}(L)\!=\!1$, $\mathbb{G}_+$ is connected and so is $\mathbb{G}$. According to Lemma \ref{subpo}, we have $L_{\beta\beta}>0$. Moreover, there holds
\begin{align}
L=\begin{bmatrix}I&0\\L^{-1}_{\beta\beta}L_{\beta\alpha}&I\end{bmatrix}'\begin{bmatrix}L_{\mathrm{r}}&0\\0&L_{\beta\beta}\end{bmatrix}\begin{bmatrix}I&0\\L^{-1}_{\beta\beta}L_{\beta\alpha}&I\end{bmatrix}.\label{congr}
\end{align}
The statement (b) then follows readily from Sylvester's law of inertia \cite{HorJoh85}.

Now we show the converse direction. Again, by Lemma \ref{connectivity}, if $L_{\mathrm{r}}\!\geq \!0$ with $\mathrm{corank}(L_{\mathrm{r}})\!=\!1$, then the spanning subgraph of $\mathbb{G}_{\mathrm{r}}$ containing all its positively weighted edges, denoted by $\mathbb{G}_{{\mathrm{r}+}}$, is connected. It turns out that $\mathbb{G}_{{\mathrm{r}+}}$ is exactly the reduced graph obtained by applying the Kron reduction on $\mathbb{G}_+$. Since connectivity is preserved by the Kron reduction, $\mathbb{G}_+$ is connected and thus $L_{\beta\beta}>0$. In view of (\ref{congr}), statement (a) follows again from Sylvester's law of inertia.
\hfill\qed

\begin{example}
{Consider the signed graph in Figure \ref{graphtwonegative}, which consists of nine nodes, fifteen positively weighted edges, and two negatively weighted edges.} The weights are labeled on the corresponding edges.
Taking nodes 5, 6, and 7 as external terminals and applying Kron reduction yields a reduced graph as shown in {Figure \ref{reducedgraph}}. Since the reduced graph is connected and does not have any negatively weighted edges, the associated reduced Laplacian is positive semidefinite with a simple zero eigenvalue. Then, according to Theorem \ref{pskron}, we know that the signed Laplacian associated with the signed graph in {Figure \ref{graphtwonegative}} is also positive semidefinite with a simple zero eigenvalue.

\begin{figure}[htbp]
\centering
\includegraphics[scale=0.38]{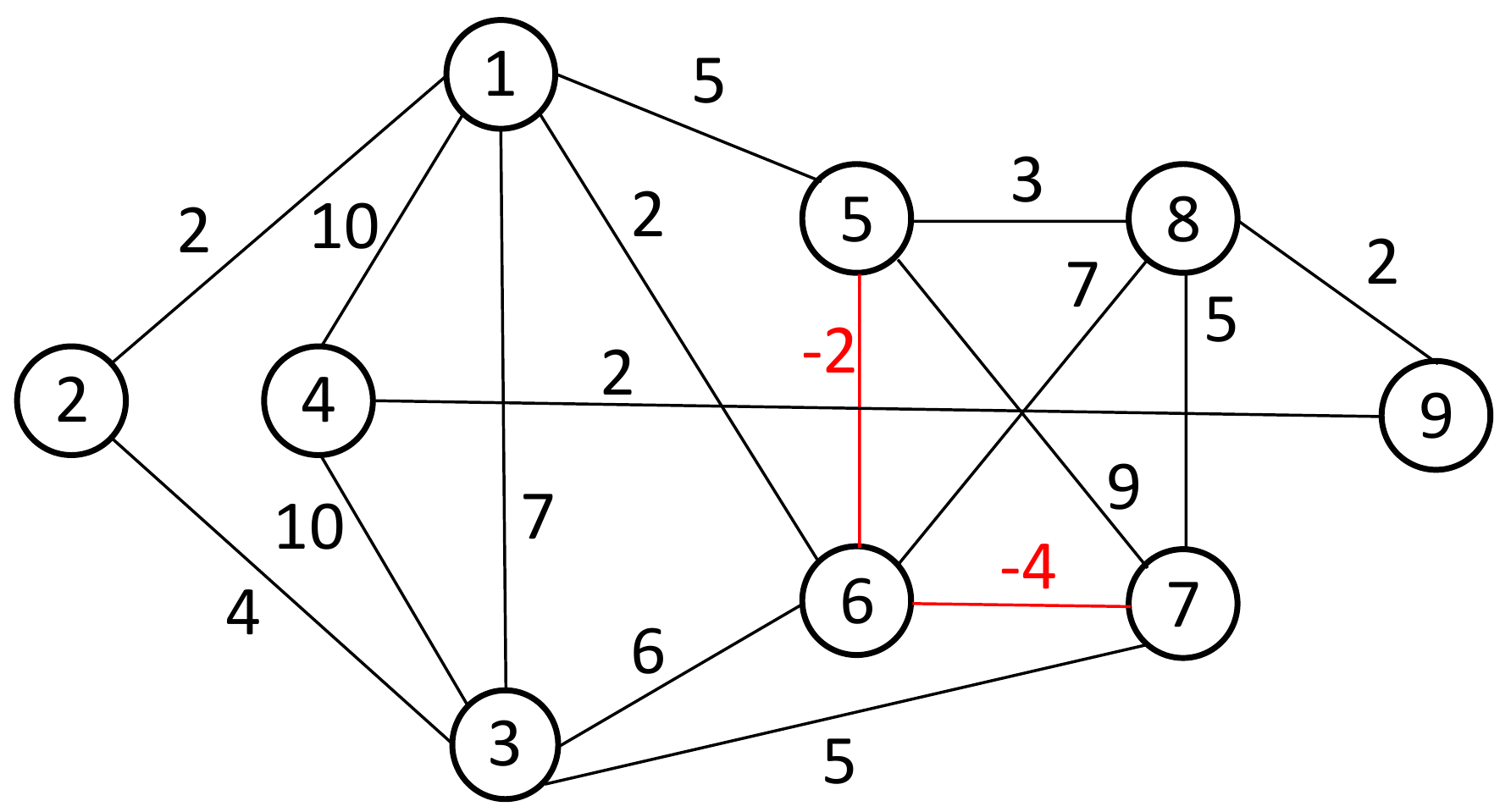}
\caption{A signed graph with two negatively weighted edges.}
\label{graphtwonegative}
\end{figure}
\begin{figure}[htbp]
\centering
\includegraphics[scale=0.38]{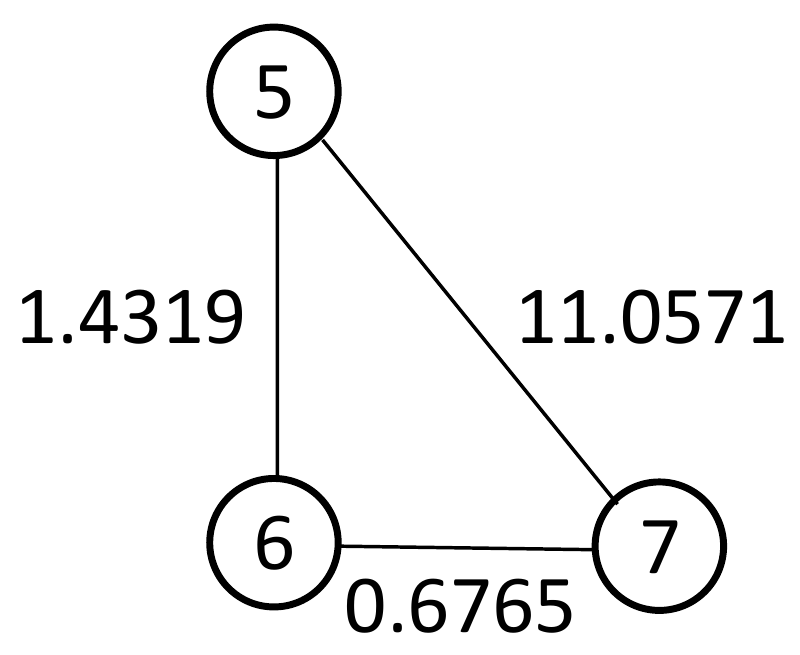}
\caption{The reduced graph after the Kron reduction.}
\label{reducedgraph}
\end{figure}
\label{kronexample}
\end{example}

\subsection{Semidefiniteness and conductance matrix}
In this subsection, we shall characterize the set of negative weights under which a signed Laplacian is positive semidefinite via the notion of resistance matrix or conductance matrix.

{It was shown in \cite{Zelazo} that when a signed graph $\mathbb{G}$ has a single negative edge $(i,j)$, the associated signed Laplacian $L$ is positive semidefinite with a simple zero eigenvalue if and only if $|a_{ij}|< \frac{1}{r^+_{\mathrm{eff}}(i,j)}$, where $r^+_{\mathrm{eff}}(i,j)\!=\!d'_{ij} L^{\dagger}_+ d_{ij}$ is the effective resistance between nodes $i$ and $j$ over the subgraph $\mathbb{G}_+$. This condition can also be expressed as $r_{\mathrm{eff}}(i,j)> 0$ as $\frac{1}{r_{\mathrm{eff}}(i,j)}=\frac{1}{r^+_{\mathrm{eff}}(i,j)}+a_{ij}$.
Our objective is to extend this type of condition to the general case with multiple negative edges and no restrictions on the positions of negative edges.} To this end, we exploit the multiport network theory. Given a signed graph $\mathbb{G}$, we can express $\mathbb{G}$ as the union of three subgraphs:
$\mathbb{G}=\mathbb{F}_-\cup\mathbb{C}_-\cup\mathbb{G}_+$,
where $\mathbb{F}_-=(\mathcal{V},\mathcal{E}_{\mathbb{F}_-})$ is a spanning forest of
$\mathbb{G}_- $, and $\mathbb{C}_-$ is a spanning subgraph of $\mathbb{G}_-$ containing the rest of the edges in $\mathbb{G}_-$.
Suppose that $\mathbb{F}_-$ has $m_{\mathbb{F}_-}$ edges. In many applications, it may well happen that the number of negatively weighted edges is small compared to the size of the whole graph, which means that $m_{\mathbb{F}_-}$ can be much smaller than $n$.

If only the external behavior of the network across the edges in $\mathbb{F}_-$ is the subject of concern, we can treat the two terminals incident to each edge in $\mathbb{F}_-$ as a port, leading to an $m_{\mathbb{F}_-}$-port network. The resistance matrix and conductance matrix of such an $m_{\mathbb{F}_-}$-port network are given by
\begin{align*}
Z_{\mathbb{F}_-}=D_{\mathbb{F}_-}'L^{\dagger}D_{\mathbb{F}_-},\quad\;\; Y_{\mathbb{F}_-}=\left(D_{\mathbb{F}_-}'L^{\dagger}D_{\mathbb{F}_-}\right)^{\dagger},
\end{align*}
respectively, where $D_{\mathbb{F}_-}$ is a submatrix of $D$ that consists of all the columns corresponding to the edges in $\mathbb{F}_-$. Clearly, $D_{\mathbb{F}_-}$ has full rank. Such an $m_{\mathbb{F}_-}$-port network is strictly passive if and only if $Z_{\mathbb{F}_-}>0$, or equivalently, $Y_{\mathbb{F}_-}>0$. In reference to the result in the case of a single negative edge, we raise the following question: Does strict passivity of this $m_{\mathbb{F}_-}$-port network ensure that $L$ is positive semidefinite with corank 1?
The answer is in the affirmative, as captured in the following theorem, {which was shown in the authors' conference paper \cite{chen2017spectral}. The proof exploits the shorted operator of a multiport network and is not included here for brevity.}

\begin{theorem}[\!\!\cite{chen2017spectral}]\label{psresistance}
For a signed Laplacian $L\in\mathbb{R}^{n\times n}$, the following statements are equivalent:
\begin{enumerate}[(a)]
\item $L\geq 0$ with $\mathrm{corank}(L)=1$.
\item $\mathbb{G}_+$ is connected and $Z_{\mathbb{F}_{\!-}}\!>\!0$ (or equivalently, $Y_{\mathbb{F}_-}\!>\!0$).
\end{enumerate}
\end{theorem}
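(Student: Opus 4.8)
The plan is to reduce the whole statement to an $m_{\mathbb{F}_-}\times m_{\mathbb{F}_-}$ linear-algebra identity, working throughout under the standing assumption that $\mathbb{G}_+$ is connected. This assumption is harmless: it is part of (b), and it is forced by (a) through Lemma~\ref{connectivity}. So it suffices to prove that, when $\mathbb{G}_+$ is connected, $L\ge 0$ with $\mathrm{corank}(L)=1$ holds if and only if $Z_{\mathbb{F}_-}>0$. Under this assumption $L_+^\dagger\ge 0$ with $\kernel{L_+}=\mathrm{span}\{\mathbf 1\}$, so $\range{D_{\mathbb{F}_-}}\subseteq\mathbf 1^{\perp}=\range{L_+}$; since $D_{\mathbb{F}_-}$ also has full column rank, the resistance matrix of the ports $\mathbb{F}_-$ over $\mathbb{G}_+$, namely $R:=D_{\mathbb{F}_-}'L_+^\dagger D_{\mathbb{F}_-}$, is positive definite. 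Write $W_{\mathbb{F}_-}$ and $W_{\mathbb{C}_-}$ for the diagonal matrices of the (negative) weights of the edges in $\mathbb{F}_-$ and $\mathbb{C}_-$.

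\emph{Step 1: a closed form for $Z_{\mathbb{F}_-}$.} Since $\mathbb{F}_-$ is a spanning forest of $\mathbb{G}_-$, each edge of $\mathbb{C}_-$ joins two nodes in a common tree of $\mathbb{F}_-$; telescoping along the connecting $\mathbb{F}_-$-paths gives $D_{\mathbb{C}_-}=D_{\mathbb{F}_-}S$ for some $(-1,0,1)$-matrix $S$, hence $L_-=D_{\mathbb{F}_-}\hat W_- D_{\mathbb{F}_-}'$ with $\hat W_-:=W_{\mathbb{F}_-}+SW_{\mathbb{C}_-}S'$, and $\hat W_-<0$ because $W_{\mathbb{F}_-}<0$ and $SW_{\mathbb{C}_-}S'\le 0$. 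Thus $L=L_++D_{\mathbb{F}_-}\hat W_- D_{\mathbb{F}_-}'$: the resistive network of $\mathbb{G}$ is the parallel connection, at the ports $\mathbb{F}_-$, of the networks of $\mathbb{G}_+$ and $\mathbb{G}_-$ — legitimate in the sense of (\ref{parallelconnection}) because the endpoints of the $\mathbb{F}_-$-edges are exactly the nodes incident to negative edges, i.e. the only nodes through which the two subnetworks interact. The conductance matrix of the $\mathbb{G}_+$-side at these ports is $R^{-1}$; for the $\mathbb{G}_-$-side a short computation using the full column rank of $D_{\mathbb{F}_-}$ and invertibility of $\hat W_-$ gives $D_{\mathbb{F}_-}'L_-^\dagger D_{\mathbb{F}_-}=\hat W_-^{-1}$, so its conductance matrix is $\hat W_-$. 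By (\ref{parallelconnection}), $Y_{\mathbb{F}_-}=R^{-1}+\hat W_-$, and therefore
\[
Z_{\mathbb{F}_-}=\bigl(R^{-1}+\hat W_-\bigr)^{\dagger},\qquad\text{so }\; Z_{\mathbb{F}_-}>0\iff R^{-1}+\hat W_->0 .
\]

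\emph{Step 2: the inertia of $L$.} Next I would feed $L=L_++D_{\mathbb{F}_-}\hat W_- D_{\mathbb{F}_-}'$ into Haynsworth inertia additivity via the bordered matrix $\Theta=\begin{bmatrix}L_+&D_{\mathbb{F}_-}\\ D_{\mathbb{F}_-}'&-\hat W_-^{-1}\end{bmatrix}$. The Schur complement of the invertible block $-\hat W_-^{-1}$ equals $L$, giving $\pi(\Theta)=(0,0,m_{\mathbb{F}_-})+\pi(L)$; the Schur complement of $L_+$ — valid with the Moore--Penrose inverse since $\range{D_{\mathbb{F}_-}}\subseteq\range{L_+}$ — equals $-\hat W_-^{-1}-R$, giving $\pi(\Theta)=(0,1,n-1)+\pi\bigl(-(\hat W_-^{-1}+R)\bigr)$. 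Matching the two expressions yields $\pi_-(L)=\pi_+(\hat W_-^{-1}+R)$ and $\pi_0(L)=1+\pi_0(\hat W_-^{-1}+R)$; consequently $L\ge 0$ with $\mathrm{corank}(L)=1$ is equivalent to $\hat W_-^{-1}+R<0$. Writing $P:=-\hat W_->0$, this reads $R-P^{-1}<0$, i.e. $P^{-1}>R$, which by order-reversal of inverses of positive definite matrices is the same as $R^{-1}>P$, i.e. $R^{-1}+\hat W_->0$. Combined with Step~1 this gives $Z_{\mathbb{F}_-}>0\iff R^{-1}+\hat W_->0\iff L\ge 0$ with $\mathrm{corank}(L)=1$, which proves the theorem.

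The main obstacle I anticipate is Step~1 — specifically the bookkeeping that collapses all the negative edges into the single full-column-rank factor $D_{\mathbb{F}_-}\hat W_- D_{\mathbb{F}_-}'$ (the identities $D_{\mathbb{C}_-}=D_{\mathbb{F}_-}S$ and $\hat W_-<0$), and the care needed to cast the decomposition $L=L_++L_-$ as a genuine parallel connection so that (\ref{parallelconnection}) applies verbatim, including in the degenerate case where $R^{-1}+\hat W_-$ (equivalently $Z_{\mathbb{F}_-}$) is singular — which must be permitted, since in that case the correct conclusion is $\mathrm{corank}(L)>1$ rather than $L$ having a negative eigenvalue. By comparison, the inertia accounting in Step~2 is routine once the range inclusion $\range{D_{\mathbb{F}_-}}\subseteq\range{L_+}$ is noted.
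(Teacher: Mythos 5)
The paper does not actually contain a proof of Theorem~\ref{psresistance} to compare against --- it defers to the conference version and only remarks that the argument uses the shorted operator of a multiport --- so your attempt must be judged on its own terms and against the adjacent machinery (the proofs of Theorems~\ref{resthm3} and~\ref{inertiacon}). Your Step~2 is correct and is the valuable part: the bordered matrix $\Theta$ and the two Haynsworth expansions legitimately give $\pi_-(L)=\pi_+(\hat W_-^{-1}+R)$ and $\pi_0(L)=1+\pi_0(\hat W_-^{-1}+R)$, hence the clean equivalence ``$L\ge 0$ with $\mathrm{corank}(L)=1$ iff $\mathbb{G}_+$ is connected and $R^{-1}+\hat W_->0$,'' which is essentially Theorem~\ref{resthm3}. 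The direction (a)$\Rightarrow$(b) also goes through, since under (a) one has $L^{\dagger}\ge 0$ with kernel $\mathrm{span}\{\mathbf{1}\}$, so $Z_{\mathbb{F}_-}=D_{\mathbb{F}_-}'L^{\dagger}D_{\mathbb{F}_-}>0$ directly, without any parallel-connection identity.

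The gap is exactly the one you flag and then set aside: the identity $Z_{\mathbb{F}_-}=(R^{-1}+\hat W_-)^{\dagger}$ in the degenerate case. Your Woodbury computation proves it only when $R^{-1}+\hat W_-$ (equivalently $L$ restricted to $\mathbf{1}^{\perp}$) is invertible; when it is singular the identity is false, and --- fatally for (b)$\Rightarrow$(a) --- so is the implication $Z_{\mathbb{F}_-}>0\Rightarrow R^{-1}+\hat W_->0$ that your final chain requires. Concretely, take $n=3$ with positive edges $(1,3)$ of weight $1$ and $(2,3)$ of weight $2$, and a single negative edge $(1,2)$ of weight $-2/3$, the boundary value $-1/r^{+}_{\mathrm{eff}}(1,2)$. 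Then $L=\frac{1}{3}vv'$ with $v=(1,2,-3)'$, so $L\ge 0$ with $\mathrm{corank}(L)=2$ and (a) fails; yet $\mathbb{G}_+$ is connected and $Z_{\mathbb{F}_-}=d_{12}'L^{\dagger}d_{12}=\frac{3}{196}(d_{12}'v)^2=\frac{3}{196}>0$, while $R^{-1}+\hat W_-=\frac{2}{3}-\frac{2}{3}=0$. The physical port conductance here is $0$ (the positive path and the negative edge cancel), but the formula $D_{\mathbb{F}_-}'L^{\dagger}D_{\mathbb{F}_-}$ no longer computes the port resistance, because $d_{12}\notin\range{L}$ and the network equation $L\bfu=d_{12}$ is inconsistent. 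Consequently you cannot close (b)$\Rightarrow$(a) with the literal formula $Z_{\mathbb{F}_-}=D_{\mathbb{F}_-}'L^{\dagger}D_{\mathbb{F}_-}$: you must either interpret $Y_{\mathbb{F}_-}$ as the physically defined conductance matrix of the multiport (obtained by shorting/grounding, which always equals $R^{-1}+\hat W_-$, whereupon your Step~2 alone finishes the proof), or justify the formula $D_{\mathbb{F}_-}'L^{\dagger}D_{\mathbb{F}_-}$ only in the regime $\range{D_{\mathbb{F}_-}}\subseteq\range{L}$ and treat the complementary case separately. As written, Step~1 does not survive the degenerate case, and the degenerate case is precisely where the content of (b)$\Rightarrow$(a) lies.
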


\begin{remark}
When all the weights are positive, the inequality $Z_{\mathbb{F}_-}>0$ becomes irrelevant. Then, Theorem \ref{psresistance} reduces to the well-known result for classical Laplacian matrices.
\end{remark}

The following conclusion can be drawn from Theorem \ref{psresistance}. To characterize the positive semidefiniteness of a signed Laplacian with multiple negative weights, the effective resistances and transfer effective resistances should be considered in a combined way. We note that the choice of a spanning forest $\mathbb{F}_-$ in $\mathbb{G}_-$ is not unique. Nevertheless, Theorem \ref{psresistance} holds for any choice of $\mathbb{F}_-$.

It is important to point out that in Theorem \ref{psresistance} the effects of negative and positive weights on the positive semidefiniteness of $L$ are weaved together in the positive definite matrices $Z_{\mathbb{F}_-}$ and $Y_{\mathbb{F}_-}$. This motivates a further question: Is it possible to have a characterization that explicitly separates the effects of negative and positive weights?

To find answers, we exploit the parallel connection of multiport networks and the associated matrix operation. Specifically, we consider the aforementioned $m_{\mathbb{F}_-}$-port network as a parallel connection of an $m_{\mathbb{F}_-}$-port network with all the positive resistances and another $m_{\mathbb{F}_-}$-port network with all the negative resistances. See {Figure \ref{examplepara}} for an illustration.

\begin{figure}[htbp]
    \centering
    \vspace{-5pt}
    \subfloat[A 2-port network associated with a signed graph, where the edges in black correspond to positive resistances and the edges in red correspond to negative resistances.]{{\includegraphics[scale=0.36]{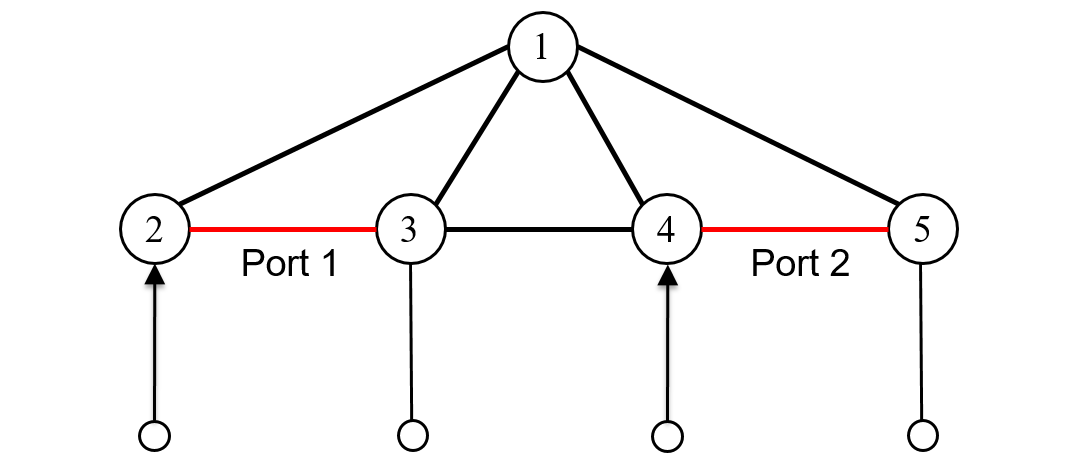} }}
    \vspace{8pt}
    \subfloat[A parallel connection of a 2-port network with positive resistances and a 2-port network with negative resistances.]{{\includegraphics[scale=0.34]{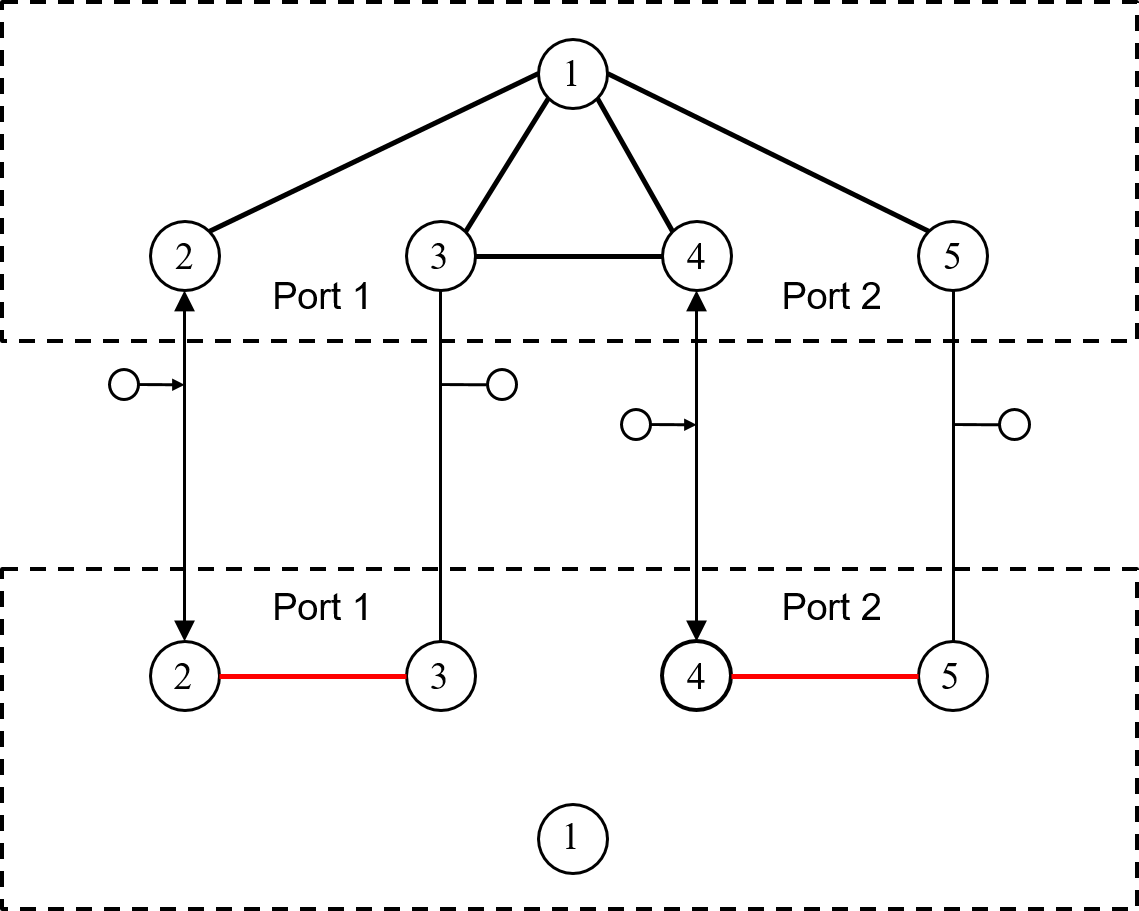} }}
    \caption{A 2-port network considered as a parallel connection of a 2-port with merely positive resistances and a 2-port with merely negative resistances.}
    \label{examplepara}
    \vspace{-10pt}
\end{figure}

Note that the resistance matrices of the $m_{\mathbb{F}_{\!-}}$-port network with only positive resistances and the $m_{\mathbb{F}_-}$-port network with only negative resistances are given by
\begin{align*}
Z^+_{\mathbb{F}_-}=D'_{\mathbb{F}_-}L^{\dagger}_+D_{\mathbb{F}_-}\text{ and }Z^-_{\mathbb{F}_-}=D'_{\mathbb{F}_-}L^{\dagger}_-D_{\mathbb{F}_-},
\end{align*}
respectively. Also, the conductance matrices of the $m_{\mathbb{F}_{\!-}}$-port network with positive resistances and the $m_{\mathbb{F}_-}$-port network with negative resistances are given by
\begin{align*}
Y^+_{\mathbb{F}_-}=\left(D'_{\mathbb{F}_-}L^{\dagger}_+D_{\mathbb{F}_-}\right)^{\dagger}\text{ and }Y^-_{\mathbb{F}_-}=\left(D'_{\mathbb{F}_-}L^{\dagger}_-D_{\mathbb{F}_-}\right)^{\dagger},
\end{align*}
respectively. Then, in view of (\ref{parallelconnection}),
we have
\begin{align*}
Z_{\mathbb{F}_-}=\left({Z_{\mathbb{F}_-}^+}^{\dagger}+{Z_{\mathbb{F}_-}^-}^{\dagger}\right)^{\dagger}\text{ and }Y_{\mathbb{F}_-}=Y^+_{\mathbb{F}_-}+Y^-_{\mathbb{F}_-}.
\end{align*}

The next theorem characterizes the set of negative weights that give rise to a positive semidefinite $L$ with corank 1. {The theorem has been stated in our conference paper \cite{chen2016semidefiniteness}, but without a proof. A concise yet informative proof is provided here by exploiting the parallel connection of multiport networks.} We denote by $D_{\mathbb{G}_-}$ a submatrix of $D$ comprising all the columns of $D$ corresponding to the edges in $\mathbb{G}_-$. Also, we let $W_-$ be the corresponding submatrix of $W$ containing all the negative weights in $\mathbb{G}_-$.

\begin{theorem}\label{resthm3}
For a given signed Laplacian $L\in\mathbb{R}^{n\times n}$, the following statements are equivalent:
\begin{enumerate}[(a)]
\item $L\geq 0$ with $\mathrm{corank}(L)=1$.
\item $\mathbb{G}_+$ is connected, and $Z^+_{\mathbb{F}_-}\!<\!-Z^-_{\mathbb{F}_-}$, or equivalently, $Y^+_{\mathbb{F}_-}\!>\!-Y^-_{\mathbb{F}_-}$.
\end{enumerate}
\end{theorem}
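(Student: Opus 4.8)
The plan is to derive this theorem as a short consequence of Theorem~\ref{psresistance}, the parallel-connection identity $Y_{\mathbb{F}_-}=Y^+_{\mathbb{F}_-}+Y^-_{\mathbb{F}_-}$ obtained above from~(\ref{parallelconnection}), and — as the only genuinely analytic ingredient — the order-reversing property of inversion on the cone of positive definite matrices. Since Theorem~\ref{psresistance} already states that (a) is equivalent to ``$\mathbb{G}_+$ connected and $Y_{\mathbb{F}_-}>0$'' (equivalently $Z_{\mathbb{F}_-}>0$), all that is needed is to re-express the condition $Y_{\mathbb{F}_-}>0$ in terms of $Y^\pm_{\mathbb{F}_-}$ (and $Z^\pm_{\mathbb{F}_-}$) under the standing assumption that $\mathbb{G}_+$ is connected.

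First I would verify that, once $\mathbb{G}_+$ is connected, the four matrices $Z^+_{\mathbb{F}_-}$, $-Z^-_{\mathbb{F}_-}$, $Y^+_{\mathbb{F}_-}$, $-Y^-_{\mathbb{F}_-}$ are all positive definite, so that $Y^\pm_{\mathbb{F}_-}=(Z^\pm_{\mathbb{F}_-})^{-1}$ are ordinary inverses. The matrix $D_{\mathbb{F}_-}$ has full column rank, and its columns are of the form $\pm d_{ij}$, hence lie in $\mathbf{1}^\perp=\range{L_+}=\range{L_+^\dagger}$ because $\mathbb{G}_+$ is connected; as $L_+^\dagger$ is positive definite on its range, $x'Z^+_{\mathbb{F}_-}x=(D_{\mathbb{F}_-}x)'L_+^\dagger(D_{\mathbb{F}_-}x)>0$ for all $x\neq 0$. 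For the negative part, $L_-$ is negative semidefinite with $\range{L_-}=\range{D_{\mathbb{G}_-}}=\range{D_{\mathbb{F}_-}}$ (the fundamental cuts of a spanning forest span the cut space of $\mathbb{G}_-$), so $L_-^\dagger$ is negative definite on $\range{D_{\mathbb{F}_-}}$ and the identical computation gives $Z^-_{\mathbb{F}_-}<0$. In particular $Z^\pm_{\mathbb{F}_-}$ are nonsingular and $Y^+_{\mathbb{F}_-}>0$, $-Y^-_{\mathbb{F}_-}>0$.

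With this established, the two forms of statement~(b) are equivalent: since $Y^+_{\mathbb{F}_-}$ and $-Y^-_{\mathbb{F}_-}$ are positive definite, $Y^+_{\mathbb{F}_-}>-Y^-_{\mathbb{F}_-}$ holds iff $(Y^+_{\mathbb{F}_-})^{-1}<(-Y^-_{\mathbb{F}_-})^{-1}$, i.e.\ iff $Z^+_{\mathbb{F}_-}<-Z^-_{\mathbb{F}_-}$ (using $(Y^+_{\mathbb{F}_-})^{-1}=Z^+_{\mathbb{F}_-}$ and $(-Y^-_{\mathbb{F}_-})^{-1}=-(Y^-_{\mathbb{F}_-})^{-1}=-Z^-_{\mathbb{F}_-}$), by the anti-monotonicity of matrix inversion. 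It then remains only to note that, with $Y_{\mathbb{F}_-}=Y^+_{\mathbb{F}_-}+Y^-_{\mathbb{F}_-}$, the inequality $Y_{\mathbb{F}_-}>0$ is literally $Y^+_{\mathbb{F}_-}>-Y^-_{\mathbb{F}_-}$; chaining this with the characterization of (a) from Theorem~\ref{psresistance} gives (a)$\Leftrightarrow$(b) in both forms.

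The argument is almost entirely range-and-sign bookkeeping; the one nontrivial but classical fact it leans on is the order-reversal of inversion on positive definite matrices, and the only place demanding care is ensuring $Y^\pm_{\mathbb{F}_-}$ are finite (equivalently $Z^\pm_{\mathbb{F}_-}$ nonsingular) before applying it — which is exactly what connectivity of $\mathbb{G}_+$ provides, so that hypothesis in (b) is genuinely used and cannot be dropped from this line of reasoning. I do not expect any obstacle beyond this.
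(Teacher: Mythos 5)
Your proposal is correct and follows essentially the same route as the paper's own proof: invoke Theorem~\ref{psresistance}, establish $Z^+_{\mathbb{F}_-}>0$ and $Z^-_{\mathbb{F}_-}<0$ from connectivity of $\mathbb{G}_+$ via the full column rank of $D_{\mathbb{F}_-}$ and the range/kernel structure of $L_\pm^\dagger$, and then read off $Y_{\mathbb{F}_-}>0 \Leftrightarrow Y^+_{\mathbb{F}_-}>-Y^-_{\mathbb{F}_-}$ from the parallel-connection identity. The only difference is that you explicitly justify the equivalence of the $Z$-form and $Y$-form of condition (b) via anti-monotonicity of inversion on positive definite matrices, a detail the paper leaves implicit behind the phrase ``or equivalently.''
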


\begin{proof}
One can easily show that when $\mathbb{G}_+$ is connected, $Z^+_{\mathbb{F}_-}=D'_{\mathbb{F}_-}L^{\dagger}_+D_{\mathbb{F}_-}>0$. This is due to the fact that $L^{\dagger}_+\geq 0$, $D_{\mathbb{F}_-}$ has full column rank, and the range of $D_{\mathbb{F}_-}$ is orthogonal to the kernel of $L^{\dagger}_+$. By similar arguments, one can also show $Z^-_{\mathbb{F}_-}<0$.
Hence, in this case,
\begin{align*}
Y^+_{\mathbb{F}_-}={Z^+_{\mathbb{F}_-}}^{-1}>0\text{ and }
Y^-_{\mathbb{F}_-}={Z^-_{\mathbb{F}_-}}^{-1}<0.
\end{align*}
Then, in view of Theorem \ref{psresistance}, to prove the equivalence between (a) and (b), it suffices to that $Y_{\mathbb{F}_-}>0$ is equivalent to $Y^+_{\mathbb{F}_-}>-Y^-_{\mathbb{F}_-}$. This follows directly from $Y_{\mathbb{F}_-}=Y^+_{\mathbb{F}_-}+Y^-_{\mathbb{F}_-}$.
\end{proof}

\begin{remark}
In many applications, $\mathbb{G}_+$ represents a nominal graph which may suffer from perturbations in the form of negatively weighted edges. In this regard, $Y^+_{\mathbb{F}_-}$ can be considered as a measure of fragility of $\mathbb{G}_+$ under such perturbations. The larger $Y^+_{\mathbb{F}_-}$ is, the less fragile $\mathbb{G}_+$ is.
\end{remark}

From the above theorem, we have the following corollary.

\begin{corollary}\label{cor1}
If $\mathbb{G}$ does not have any cycle containing two or more negative edges,
then $L\!\geq\! 0$ with $\mathrm{corank}(L)\!=\!1$ if and only if
$\mathbb{G}_+$ is connected, and $r^+_{\mathrm{eff}}(i,j)\!<\!\frac{1}{|a_{ij}|}$ for all $(i,j)\!\in\!\mathcal{E}_{-}$.
\end{corollary}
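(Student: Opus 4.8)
The plan is to deduce the corollary from Theorem~\ref{resthm3} by showing that, under the stated cycle condition, both $Z^-_{\mathbb{F}_-}$ and $Z^+_{\mathbb{F}_-}$ are \emph{diagonal}, so that the matrix inequality $Z^+_{\mathbb{F}_-}<-Z^-_{\mathbb{F}_-}$ appearing in Theorem~\ref{resthm3}(b) decouples into the scalar inequalities $r^+_{\mathrm{eff}}(i,j)<1/|a_{ij}|$. Throughout one may assume $\mathbb{G}_+$ is connected: if it is not, statement~(a) of Theorem~\ref{resthm3} fails by Lemma~\ref{connectivity} and the right-hand side of the corollary also fails, so the asserted equivalence holds trivially.

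First I would note that the hypothesis forces $\mathbb{G}_-$ to be a forest, since any cycle of $\mathbb{G}_-$ is a cycle of $\mathbb{G}$ all of whose (at least three) edges are negative. Therefore $\mathbb{F}_-=\mathbb{G}_-$, the residual subgraph $\mathbb{C}_-$ is empty, and $D_{\mathbb{F}_-}=D_{\mathbb{G}_-}$ has full column rank with exactly one column $d_{ij}$ per negative edge $(i,j)\in\mathcal{E}_-$. Computing $Z^-_{\mathbb{F}_-}$ is then immediate: from $L_-=D_{\mathbb{G}_-}W_-D_{\mathbb{G}_-}'$ and the pseudoinverse identity $L_-L_-^\dagger L_-=L_-$, one cancels the full-column-rank factor $D_{\mathbb{G}_-}$ on the left and $D_{\mathbb{G}_-}'$ on the right (each admits a one-sided inverse) to get $W_-D_{\mathbb{G}_-}'L_-^\dagger D_{\mathbb{G}_-}W_-=W_-$, hence $Z^-_{\mathbb{F}_-}=D_{\mathbb{G}_-}'L_-^\dagger D_{\mathbb{G}_-}=W_-^{-1}$, a diagonal matrix with entries $1/a_{ij}<0$; thus $-Z^-_{\mathbb{F}_-}=\mathrm{diag}\{1/|a_{ij}|\}$.

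The main obstacle is showing $Z^+_{\mathbb{F}_-}=D_{\mathbb{G}_-}'L_+^\dagger D_{\mathbb{G}_-}$ is diagonal, i.e., that for distinct $(i,j),(s,t)\in\mathcal{E}_-$ the transfer effective resistance $d_{ij}'L_+^\dagger d_{st}$ over $\mathbb{G}_+$ vanishes. I would argue topologically and then electrically. Since $(i,j)$ and $(s,t)$ are edges of $\mathbb{G}$, any cycle of $\mathbb{G}_+\cup\{(i,j),(s,t)\}$ passing through both of them would be a cycle of $\mathbb{G}$ with two negative edges, which is forbidden. Because neither $(i,j)$ nor $(s,t)$ is a bridge of $\mathbb{G}_+\cup\{(i,j),(s,t)\}$ (deleting either leaves the connected graph $\mathbb{G}_+$ intact) and since in a $2$-connected graph any two edges lie on a common cycle, $(i,j)$ and $(s,t)$ belong to different blocks; hence there is a cut vertex $w$ separating $\{i,j\}$ from $\{s,t\}$ in $\mathbb{G}_+\cup\{(i,j),(s,t)\}$, and, as $\mathbb{G}_+$ is a spanning connected subgraph, $w$ also separates $\{i,j\}$ from $\{s,t\}$ in $\mathbb{G}_+$. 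Now read $d_{ij}'L_+^\dagger d_{st}=\phi_i-\phi_j$ where $\phi=L_+^\dagger d_{st}$ is the potential induced in $\mathbb{G}_+$ by injecting a unit current at $s$ and extracting it at $t$. The component $R$ of $\mathbb{G}_+\setminus w$ containing $i$ and $j$ carries no current: every path from $R$ to $s,t$ passes through $w$, and $R$ contains no current source or sink, so by conservation the electrical flow vanishes on $R\cup\{w\}$; hence $\phi$ is constant there, giving $\phi_i=\phi_j$ and $d_{ij}'L_+^\dagger d_{st}=0$. The degenerate cases in which the two negative edges share an endpoint, or in which $w\in\{i,j,s,t\}$, are handled verbatim by the same current-conservation argument (with $R$ the relevant pendant component).

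It then remains only to assemble the pieces. We have shown $Z^+_{\mathbb{F}_-}=\mathrm{diag}\{r^+_{\mathrm{eff}}(i,j)\}$ — its diagonal entries are $d_{ij}'L_+^\dagger d_{ij}=r^+_{\mathrm{eff}}(i,j)$ and its off-diagonal entries vanish — while $-Z^-_{\mathbb{F}_-}=\mathrm{diag}\{1/|a_{ij}|\}$. Consequently the condition ``$\mathbb{G}_+$ connected and $Z^+_{\mathbb{F}_-}<-Z^-_{\mathbb{F}_-}$'' of Theorem~\ref{resthm3}(b) is exactly ``$\mathbb{G}_+$ connected and $r^+_{\mathrm{eff}}(i,j)<1/|a_{ij}|$ for all $(i,j)\in\mathcal{E}_-$'', and Theorem~\ref{resthm3}(a) is exactly ``$L\ge0$ with $\mathrm{corank}(L)=1$'', which is the claimed equivalence.
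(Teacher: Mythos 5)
Your proposal is correct and follows the same route as the paper: reduce to Theorem~\ref{resthm3} by showing that the no-two-negative-edges-on-a-cycle condition forces $Z^+_{\mathbb{F}_-}$ and $Z^-_{\mathbb{F}_-}$ to be diagonal with entries $r^+_{\mathrm{eff}}(i,j)$ and $1/a_{ij}$, so the matrix inequality decouples entrywise. The paper merely asserts the diagonality (``the voltage across any port depends only on the current through that port''), whereas you supply the justification -- the forest structure of $\mathbb{G}_-$, the pseudoinverse cancellation giving $Z^-_{\mathbb{F}_-}=W_-^{-1}$, and the block/cut-vertex plus harmonic-function argument for the vanishing transfer resistances -- all of which checks out.
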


\begin{proof}
When $\mathbb{G}$ does not have any cycle containing two or more negatively weighted edges, the voltage across any port $(i,j)\in\mathbb{F}_-$ merely depends on the current through the port itself. In this case, both $Z_{\mathbb{F}_-}^+$ and $Z_{\mathbb{F}_-}^-$ are diagonal matrices with the corresponding diagonal elements given by $r^+_{\mathrm{eff}}(i,j)$ and $\frac{1}{a_{ij}}$, respectively. The conclusion then follows immediately from Theorem \ref{resthm3}.
\end{proof}

Corollary \ref{cor1} is consistent with the statements in \cite[Theorem III.4]{Zelazo} and \cite[Theorem 3.2]{yxchen}.

\begin{example}
We consider a modified version of the signed graph in Example \ref{kronexample}, where the positive weights are labeled on the respective edges as in {Figure \ref{graphtwonegative}}, but the negative weights $a_{56},a_{67}$ are no longer fixed a priori. By Theorem \ref{resthm3}, the set of negative weights that give rise to a positive semidefinite signed Laplacian with a simple zero eigenvalue is characterized by the inequality
\begin{align*}{
\begin{bmatrix}-\frac{1}{a_{56}}&0\\0&-\frac{1}{a_{67}}\end{bmatrix}>\begin{bmatrix}0.1488&-0.1046\\-0.1046&0.1371\end{bmatrix},}
\end{align*}
and depicted as the interior of the shaded area in {Figure \ref{negativeweightset}}.
\begin{figure}[h!]
\centering
\includegraphics[scale=0.35]{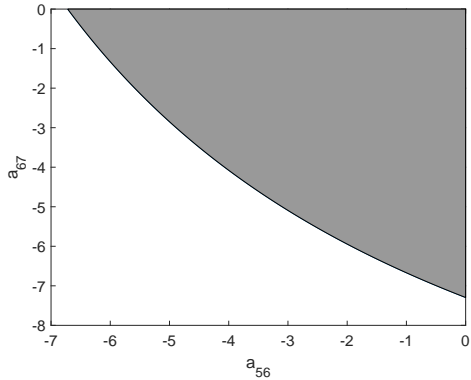}
\caption{The set of negative weights $a_{56},a_{67}$ that give rise to $L\geq 0$ with a simple zero eigenvalue.}
\label{negativeweightset}
\end{figure}
\end{example}

{\begin{remark}
We wish to mention that the semidefiniteness of a signed Laplacian $L$ can also be verified in a sequential way, as if the negative edges are added one by one. The procedure is sketched as below. Starting from a connected $\mathbb{G}_+$, we add one negative edge back, say $(i,j)$, forming a graph $\mathbb{G}_1$ with the associated signed Laplacian $L_1$. We know that if $|a_{ij}|<\frac{1}{d'_{ij} L^{\dagger}_+ d_{ij}}$, then $L_1\!\geq\! 0$ with corank 1. Next, we add another \mbox{negative edge} back to $\mathbb{G}_1$, say $(k,l)$, forming a graph $\mathbb{G}_2$ with the associated signed Laplacian $L_2$. It is not difficult to show that if $|a_{kl}|<\frac{1}{d'_{kl} L^{\dagger}_1 d_{kl}}$, then $L_2\!\geq\! 0$ with corank 1. As a matter of fact, this can be inferred from \cite[Theorem 3]{song2017network}. As this process repeats itself, if the corresponding inequality condition continues to hold until the addition of the last negative edge, then we know that $L\geq 0$ with corank 1. Otherwise, $L$ is indefinite or has multiple zero eigenvalues or both.
\end{remark}}

\section{Inertias of Signed Laplacians}\label{inertia}
{When a signed Laplacian is indefinite, its inertia is often of importance, as discussed in the motivating applications in Section II.} It turns out that the Kron reduced signed Laplacian and the conductance matrix of the multiport network encapsulate the inertia of an indefinite signed Laplacian.

\subsection{Inertia and Kron reduction}
As in Section IV.A, we treat the nodes incident to negatively weighted edges as external terminals and the remaining nodes as interior terminals. Then, the signed Laplacian $L$ admits the partition as in (\ref{kronpartition}).
Applying the Kron reduction on $\mathbb{G}$ leads to a reduced network $\mathbb{G}_{\mathrm{r}}$ with the associated signed Laplacian $L_{\mathrm{r}}$.
\begin{theorem}
Assume that $\mathbb{G}$ is connected. Then
$\pi(L)=\pi(L_{\mathrm{r}})+(0,0,|\beta|)$.
\end{theorem}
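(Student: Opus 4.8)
The plan is to reuse the congruence factorization that already appeared in the proof of Theorem~\ref{pskron}, namely identity~(\ref{congr}), and then read off inertias via Sylvester's law. The first step is to observe that the connectivity hypothesis is exactly what makes the relevant block invertible: since $\mathbb{G}$ is connected, Lemma~\ref{subpo} gives $L_{\beta\beta}>0$. In particular $L_{\beta\beta}$ is nonsingular, so $L^{\dagger}_{\beta\beta}=L^{-1}_{\beta\beta}$ and the Kron-reduced Laplacian $L_{\mathrm{r}}=L_{\alpha\alpha}-L_{\alpha\beta}L^{\dagger}_{\beta\beta}L_{\beta\alpha}$ of~(\ref{Lred}) coincides with the ordinary Schur complement $L_{\alpha\alpha}-L_{\alpha\beta}L^{-1}_{\beta\beta}L_{\beta\alpha}$.

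The second step is to invoke the block-$LDL'$ factorization~(\ref{congr}):
\[
L=\begin{bmatrix}I&0\\L^{-1}_{\beta\beta}L_{\beta\alpha}&I\end{bmatrix}'\begin{bmatrix}L_{\mathrm{r}}&0\\0&L_{\beta\beta}\end{bmatrix}\begin{bmatrix}I&0\\L^{-1}_{\beta\beta}L_{\beta\alpha}&I\end{bmatrix},
\]
which is valid for any nonsingular $L_{\beta\beta}$ (this is the point where the proof of Theorem~\ref{pskron} used $L\geq 0$ only to get $L_{\beta\beta}>0$ from Lemma~\ref{connectivity}; here connectivity is assumed outright, so no definiteness of $L$ is needed). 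The outer unit lower-triangular block matrix is invertible, so $L$ is congruent to $\mathrm{diag}(L_{\mathrm{r}},L_{\beta\beta})$, and Sylvester's law of inertia \cite{HorJoh85} yields $\pi(L)=\pi\!\left(\mathrm{diag}(L_{\mathrm{r}},L_{\beta\beta})\right)$.

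The third and final step is purely arithmetic: the inertia of a block-diagonal symmetric matrix is the entrywise sum of the inertias of its blocks, so $\pi(L)=\pi(L_{\mathrm{r}})+\pi(L_{\beta\beta})$. Since $L_{\beta\beta}$ is a positive definite $|\beta|\times|\beta|$ matrix, $\pi(L_{\beta\beta})=(0,0,|\beta|)$, giving $\pi(L)=\pi(L_{\mathrm{r}})+(0,0,|\beta|)$, as claimed. I do not anticipate a genuine obstacle here; the only thing that needs care is confirming that the congruence identity is available under the weaker hypothesis (connectivity rather than semidefiniteness), which is precisely the relaxation of the conference-version assumption flagged in the introduction, and this is immediate from Lemma~\ref{subpo}.
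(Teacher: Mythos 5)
Your proposal is correct and follows essentially the same route as the paper's proof: connectivity plus Lemma~\ref{subpo} gives $L_{\beta\beta}>0$, and then the congruence~(\ref{congr}) together with Sylvester's law of inertia yields $\pi(L)=\pi(L_{\mathrm{r}})+\pi(L_{\beta\beta})=\pi(L_{\mathrm{r}})+(0,0,|\beta|)$. Your additional remarks (that $L_{\beta\beta}$ nonsingular makes the pseudoinverse an ordinary inverse, and that no definiteness of $L$ is needed) are accurate but are details the paper leaves implicit.
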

\begin{proof}
Since $\mathbb{G}$ is connected, by Lemma \ref{subpo}, $L_{\beta\beta}>0$. In view of (\ref{congr}), it follows from Sylvester's law of inertia that
\begin{align*}
\pi(L)=\pi(L_{\mathrm{r}})+\pi(L_{\beta\beta})=\pi(L_{\mathrm{r}})+(0,0,|\beta|).
\end{align*}
This completes the proof.
\end{proof}

\subsection{Inertia and conductance matrix}
As in Section IV.B, let $\mathbb{F}_-$ be an arbitrary spanning forest of $\mathbb{G}_-$. Taking the two nodes incident to each edge in $\mathbb{F}_-$ as two external terminals of a port gives rise to an $m_{\mathbb{F}_-}$-port network. The following theorem states an explicit relation between the inertia of the signed Laplacian $L$ and that of the conductance matrix $Y_{\mathbb{F}_-}$. {An earlier version was reported under a stronger assumption that $\mathbb{G}_+$ is connected in our conference paper \cite{chen2017spectral}.}
\begin{theorem}
\label{inertiacon}
Assume that $\mathbb{G}$ is connected. Then
$\pi(L)=\pi(Y_{\mathbb{F}_-})+(0,1,n\!-\!1\!-\!m_{\mathbb{F}_-})$.
\end{theorem}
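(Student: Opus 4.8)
The plan is to relate the inertia of $L$ to the inertia of the resistance-type matrix $D'_{\mathbb{F}_-}L^{\dagger}D_{\mathbb{F}_-}$ and then pass to its pseudoinverse $Y_{\mathbb{F}_-}$. Since $\mathbb{G}$ is connected, $L$ is a signed Laplacian of corank at least $1$, with $\mathbf{1}$ spanning a subspace of $\kernel{L}$; write $L=L_+ + L_-$ and recall the factorization $L_-=D_{\mathbb{G}_-}W_-D'_{\mathbb{G}_-}$. The key structural fact I would use is that $D_{\mathbb{G}_-}$ and $D_{\mathbb{F}_-}$ have the same column space (a spanning forest of $\mathbb{G}_-$ carries all the cut information), so there is a matrix $T$ with $D_{\mathbb{G}_-}=D_{\mathbb{F}_-}T$, and hence $L_- = D_{\mathbb{F}_-}(TW_-T')D'_{\mathbb{F}_-}=:D_{\mathbb{F}_-}\widehat{W}D'_{\mathbb{F}_-}$ for a symmetric (indefinite, with all relevant eigenvalues negative) matrix $\widehat{W}$ of size $m_{\mathbb{F}_-}$. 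Thus $L = L_+ + D_{\mathbb{F}_-}\widehat{W}D'_{\mathbb{F}_-}$ is a low-rank perturbation of the positive-semidefinite $L_+$.

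The main step is an inertia computation via a bordered matrix. I would consider
\[
M=\begin{bmatrix} L_+ & D_{\mathbb{F}_-} \\ D'_{\mathbb{F}_-} & -\widehat{W}^{-1}\end{bmatrix},
\]
(assuming for the moment $\widehat{W}$ is invertible; the general case is handled by the same trick with $\mathbb{G}_+$ connected first, then removed as the theorem's improvement over \cite{chen2017spectral}). Computing the Schur complement of the $(1,1)$ block one way gives, using that $\range{D_{\mathbb{F}_-}}$ is orthogonal to $\kernel{L_+}$ when $\mathbb{G}_+$ is connected, the Schur complement $-\widehat{W}^{-1}-D'_{\mathbb{F}_-}L_+^{\dagger}D_{\mathbb{F}_-} = -\widehat{W}^{-1}-Z^+_{\mathbb{F}_-}$; computing it the other way, the Schur complement of $-\widehat{W}^{-1}$ gives $L_+ + D_{\mathbb{F}_-}\widehat{W}D'_{\mathbb{F}_-}=L$. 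Sylvester's law of inertia applied to both block-triangularizations yields
\[
\pi(L)+\pi(-\widehat{W}^{-1}) = \pi(L_+)+\pi(-\widehat{W}^{-1}-Z^+_{\mathbb{F}_-}).
\]
Since $-\widehat{W}^{-1}$ is positive definite of size $m_{\mathbb{F}_-}$ and $\pi(L_+)=(0,1,n-1)$ (connected positive graph), this gives $\pi(L)=(0,1,n-1-m_{\mathbb{F}_-})+\pi\!\left(-\widehat{W}^{-1}-Z^+_{\mathbb{F}_-}\right)$. Finally I would identify $-\widehat{W}^{-1}-Z^+_{\mathbb{F}_-}$ with $-(Z^-_{\mathbb{F}_-})^{-1}-Z^+_{\mathbb{F}_-} \;=\; -\bigl(Y^-_{\mathbb{F}_-}+Y^+_{\mathbb{F}_-}\bigr)$ up to congruence — but the clean route is to observe that $-\widehat{W}^{-1}-Z^+_{\mathbb{F}_-}$ is congruent to $Y_{\mathbb{F}_-}=(Z^+_{\mathbb{F}_-}+Z^-_{\mathbb{F}_-}{}^{\dagger}\cdots)$; more directly, by the parallel-connection identity $Y_{\mathbb{F}_-}=Y^+_{\mathbb{F}_-}+Y^-_{\mathbb{F}_-}$ and $Y^-_{\mathbb{F}_-}=\widehat{W}$, one gets $\pi(-\widehat W^{-1}-Z^+_{\mathbb{F}_-})=\pi(Y_{\mathbb{F}_-})$, completing the count $\pi(L)=\pi(Y_{\mathbb{F}_-})+(0,1,n-1-m_{\mathbb{F}_-})$.

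The hard part will be handling the case where $\mathbb{G}_+$ is \emph{not} connected, which is exactly the assumption the theorem drops relative to \cite{chen2017spectral}. When $\mathbb{G}_+$ is disconnected, $\kernel{L_+}$ has dimension $>1$ and need not be orthogonal to $\range{D_{\mathbb{F}_-}}$, so the pseudoinverse manipulations in the bordered-matrix argument break and $Z^+_{\mathbb{F}_-}$ may fail to be finite/positive definite. I would address this by first splitting off the portion of $\kernel{L_+}$ that lies in $\kernel{L}$ (using connectivity of $\mathbb{G}$ to control $\kernel{L}$), then arguing that any extra kernel directions of $L_+$ not killed by $\widehat W D'_{\mathbb{F}_-}$ contribute predictably to the inertia — essentially, restricting all operators to $(\kernel{L_+}\cap\kernel{D'_{\mathbb{F}_-}})^{\perp}$ and running the same Schur-complement identity there. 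Alternatively, and probably cleaner for the write-up, one can combine the already-proved Kron-reduction inertia theorem ($\pi(L)=\pi(L_{\mathrm r})+(0,0,|\beta|)$) with an analogous conductance-matrix statement for the \emph{reduced} graph $\mathbb{G}_{\mathrm r}$, for which $(\mathbb{G}_{\mathrm r})_+$ is automatically connected whenever $\mathbb{G}$ is, thereby reducing the general case to the already-understood one.
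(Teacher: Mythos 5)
Your bordered-matrix argument is a valid alternative route, but only under the additional hypothesis that $\mathbb{G}_+$ is connected: the Haynsworth inertia-additivity for the generalized Schur complement of $L_+$ in $M$ requires $\range{D_{\mathbb{F}_-}}\subseteq\range{L_+}$, which holds exactly when no negative edge bridges two components of $\mathbb{G}_+$. Since the whole point of Theorem \ref{inertiacon} relative to \cite{chen2017spectral} is to drop that hypothesis, the burden of the proof lies in the case you defer to the end, and neither of your two fixes closes it. The ``cleaner'' fix rests on the claim that $(\mathbb{G}_{\mathrm r})_+$ is connected whenever $\mathbb{G}$ is; this is false. Kron reduction preserves the connectivity of $\mathbb{G}$, but the positive part of the reduced graph is (up to possible cancellation by the retained negative edges) the Kron reduction of $\mathbb{G}_+$, which is connected if and only if $\mathbb{G}_+$ is. The two-node graph with a single negative edge $a_{12}<0$ is already a counterexample: there $\beta=\emptyset$, $\mathbb{G}_{\mathrm r}=\mathbb{G}$, $(\mathbb{G}_{\mathrm r})_+$ is empty, and your identity would return $\pi(-\widehat W^{-1}-Z^+_{\mathbb{F}_-})=\pi(-1/a_{12})=(0,0,1)$ instead of the correct $\pi(Y_{\mathbb{F}_-})=(1,0,0)$, precisely because $Z^+_{\mathbb{F}_-}=0$ no longer encodes the (infinite) effective resistance over $\mathbb{G}_+$; likewise $\pi(L_+)\neq(0,1,n-1)$ there. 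The first fix (restricting to $\left(\kernel{L_+}\cap\kernel{D'_{\mathbb{F}_-}}\right)^{\perp}$) is not yet an argument: the obstruction is exactly that $D'_{\mathbb{F}_-}$ does not annihilate $\kernel{L_+}$, and you do not say what replaces $Z^+_{\mathbb{F}_-}$ on that subspace.

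The paper avoids $L_+$ altogether. Since $\mathbb{F}_-$ is a spanning forest of $\mathbb{G}_-$, no negative edge joins two of its components, so $\mathbb{F}_-$ can be augmented to a spanning tree $\mathbb{F}$ of $\mathbb{G}$ using positive edges only --- this needs only connectivity of $\mathbb{G}$. Congruence of $L^{\dagger}$ by the invertible matrix $\begin{bmatrix}D_{\mathbb{F}}&\mathbf{1}\end{bmatrix}$ gives $\pi(L)=\pi(Y^a)+(0,1,0)$ for the $(n-1)$-port conductance matrix $Y^a$ along $\mathbb{F}$, and Haynsworth is then applied to $Y^a$, whose block $Y^a_{22}$ (the network with the $m_{\mathbb{F}_-}$ negative ports shorted) is positive definite again using only connectivity of $\mathbb{G}$. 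To salvage your approach you would need to replace $L_+$ by an object whose range contains $\range{D_{\mathbb{F}_-}}$, which is essentially what this spanning-tree augmentation accomplishes. Separately, even in the connected-$\mathbb{G}_+$ case your final step $\pi(-\widehat W^{-1}-Z^+_{\mathbb{F}_-})=\pi(Y_{\mathbb{F}_-})$ is asserted rather than proved; it does hold (for $A>0$ and $B<0$ one has $\pi(A^{-1}+B^{-1})=\pi(-(A+B))$ by a second bordered-matrix argument), but it is not an immediate consequence of $Y_{\mathbb{F}_-}=Y^+_{\mathbb{F}_-}+Y^-_{\mathbb{F}_-}$ and should be written out.
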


\begin{proof}
Since $\mathbb{G}$ is connected, we can augment $\mathbb{F}_-$ with $n\!-\!1\!-\!m_{\mathbb{F}_-}$ edges from $\mathbb{G}_+$ to form a spanning tree $\mathbb{F}$ of $\mathbb{G}$. Considering the two nodes of each edge in $\mathbb{F}$ as two terminals that constitute a port, we have an augmented $(n-1)$-port network \textit{\textbf{A}} with the conductance matrix $Y^a$. Then,
\begin{align*}
\pi(L)&=\pi(L^{\dagger})=\pi\left(\begin{bmatrix}D_{\mathbb{F}} &\mathbf{1}\end{bmatrix}'L^{\dagger}\begin{bmatrix}D_{\mathbb{F}} &\mathbf{1}\end{bmatrix}\right)\\
&=\pi\left(\begin{bmatrix}D'_{\mathbb{F}}L^{\dagger}D_{\mathbb{F}} &0\\0&0\end{bmatrix}\right)=\pi\left(D'_{\mathbb{F}}L^{\dagger}D_{\mathbb{F}}\right)+(0,1,0)\\
&=\pi(Y^a)+(0,1,0),
\end{align*}
where the second equality is due to Sylvester's law of inertia, and the last equality is due to $Y^a=\left(D'_{\mathbb{F}}L^{\dagger}D_{\mathbb{F}}\right)^{\dagger}$.

We label those ports corresponding to $\mathbb{F}_-$ as the first $m_{\mathbb{F}_-}$ ports. Then, the conductance matrix admits the partition
\begin{align*}
Y^a=\begin{bmatrix}Y^a_{11}&Y^a_{12}\\Y^a_{21}&Y^a_{22} \end{bmatrix}.
\end{align*}
Shorting all the $m_{\mathbb{F}_-}$ ports corresponding to $\mathbb{F}_-$, we obtain a shorted network \textit{\textbf{D}} as depicted in {Figure \ref{shorted}}. In view of (\ref{shortconnection}), the conductance matrix of \textit{\textbf{D}} is given by $Y^a_{22}$.
\begin{figure}[htbp]
\centering
\includegraphics[scale=0.5]{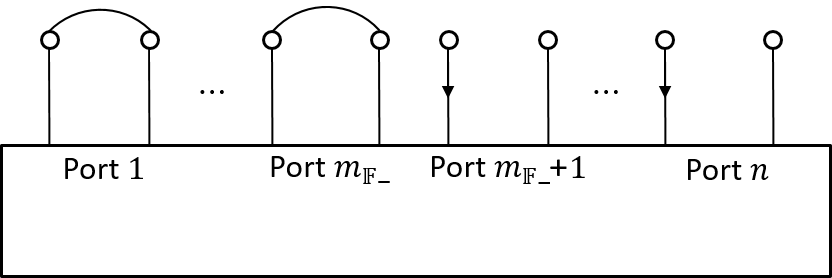}
\caption{A shorted connection \textit{\textbf{D}} of \textit{\textbf{A}}.}
\label{shorted}
\end{figure}

Since $\mathbb{G}$ is connected and no current flows through the negative resistors, the shorted network \textit{\textbf{D}} is strictly passive and, thus, $Y^a_{22}\!>\!0$. Then, applying the inertia additive formula of Schur complement \cite[Theorem 1.6]{zhang2006schur}, we have
\begin{align*}
\pi(Y^a)\!=\!\pi(Y^a\!/_{22})\!+\!\pi(Y^a_{22})\!=\!\pi(Y^a\!/_{22})+(0,0,n\!-\!1\!-\!m_{\mathbb{F}_-}).
\end{align*}
In view of (\ref{opencircuit}), we have $Y^a\!/_{22}=Y_{\mathbb{F}_-}$. Hence,
\begin{align*}
\pi(L)&=\pi(Y^a)+(0,1,0)\\
&=\pi(Y_{\mathbb{F}_-})+(0,0,n-1-m_{\mathbb{F}_-})+(0,1,0)\\
&=\pi(Y_{\mathbb{F}_-})+(0,1,n\!-\!1\!-\!m_{\mathbb{F}_-}),
\end{align*}
which completes the proof.
\end{proof}

The authors in \cite{song2017network} gave an alternative way of characterizing the inertia of a signed Laplacian $L$, under the assumption that $L$ has a simple zero eigenvalue.

{\begin{remark}
With some additional effort, one can deduce from Theorem \ref{inertiacon} the following inertia bounds first reported in \cite[Theorem 2.10]{la1}:
\begin{align*}
c(\mathbb{G}_+)-1\leq &\pi_-(L)\leq n - c(\mathbb{G}_-),\\
c(\mathbb{G}_-)-1\leq &\pi_+(L)\leq n-c(\mathbb{G}_+),\\
1\leq &\pi_0(L)\leq n+2-c(\mathbb{G}_-)-c(\mathbb{G}_+),
\end{align*}
where $c(\mathbb{G}_+)$ and $c(\mathbb{G}_-)$ represent the numbers of connected components in $\mathbb{G}_+$ and $\mathbb{G}_-$, respectively. Take the bounds on $\pi_-(L)$ for an illustration. The upper bound is straightforward as
$\pi_-(L)=\pi_-(Y_{\mathbb{F}_-})\leq m_{\mathbb{F}_-}=n-c(\mathbb{G}_-)$. Regarding the lower bound, note that there must exist $c(\mathbb{G}_+)-1$ number of negatively weighted edges linking the connected components of $\mathbb{G}_+$ together so as to form a connected spanning subgraph of $\mathbb{G}$, denote by $\tilde{\mathbb{G}}$. See {Figure \ref{inertiatopology}} for an illustration. Treating each negatively weighted edge in $\tilde{\mathbb{G}}$ as a port, one can view $\tilde{G}$ as a $(c(\mathbb{G}_+)-1)$-port network, the conductance matrix of which is diagonal with diagonal elements given simply by the negative weights in $\tilde{\mathbb{G}}$. Let $\tilde{L}$ be the signed Laplacian associated with $\tilde{\mathbb{G}}$. From Theorem \ref{inertiacon}, we have
$\pi_-(\tilde{L})=c(\mathbb{G}_+)-1$ and thus $\pi_-(L)\geq \pi_-(\tilde{L})=c(\mathbb{G}_+)-1$.
\end{remark}}

\begin{figure}[htbp]
    \centering
    \subfloat[$\mathbb{G}$]{{\includegraphics[scale=0.6]{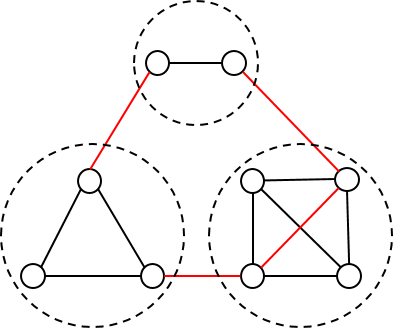} }}%
    \quad
    \subfloat[$\tilde{\mathbb{G}}$]{{\includegraphics[scale=0.6]{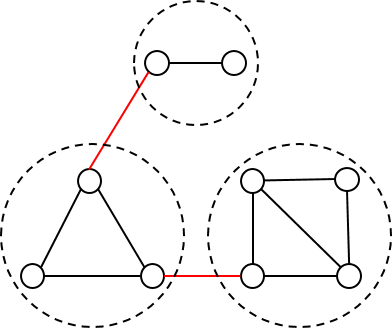} }}%
    \caption{A signed graph $\mathbb{G}$ and a spanning subgraph $\tilde{\mathbb{G}}$, where the positively weighted edges are in black and the negatively weighted edges are in red.}%
    \label{inertiatopology}%
\end{figure}

\begin{example}
We consider a modified version of the signed graph in Example \ref{kronexample}, where the positive weights are as labelled on the respective edges in {Figure \ref{graphtwonegative}} and the negative weights are changed to {$a_{56}=-3,a_{67}=-7$}. Then,
\begin{align*}{
Y_{\mathbb{F}_-}=\begin{bmatrix}11.4890&11.0571\\11.0571&8.7337\end{bmatrix},}
\end{align*}
which has one positive eigenvalue and one negative eigenvalue. It follows from Theorem \ref{inertiacon} that
\begin{align*}
{\pi(L)=(1,0,1)+(0,1,6)=(1,1,7).}
\end{align*}
If we retain the positive weights and increase the magnitudes of the negative weights so that {$a_{56}=-40,a_{67}=-30$}, then
\begin{align*}
{Y_{\mathbb{F}_-}=\begin{bmatrix}-25.5110&11.0571\\11.0571&-14.2663\end{bmatrix},}
\end{align*}
which has two negative eigenvalues. Again, by Theorem \ref{inertiacon}, we have
{$\pi(L)=(2,0,0)+(0,1,6)=(2,1,6)$}.
\end{example}

\section{Signed Laplacians and Eventual Positivity}
{In this section, we address the third question raised in Section II.A. Before proceeding, some preliminaries on eventual positivity and Perron-Frobenius property are introduced.}

It is widely recognized that the nonnegativity (exponential nonnegativity, respectively) of a matrix $A$ indicates the orthant invariance of the system $x(k+1)=Ax(t)$ ($\dot{x}(t)=Ax(t)$, respectively), i.e., the trajectory of the system's state remains in the nonnegative orthant if it starts with a nonnegative initial condition.
Recently, much attention has been paid to the eventual nonnegativity (eventual exponential nonnegativity) of a matrix $A$, which indicates only asymptotic orthant invariance, i.e., given a nonnegative initial condition, the state trajectory could exit the nonnegative orthant temporarily and returns to it at a future time and remains therein forever.
\begin{definition}
A matrix $A\in\mathbb{R}^{n\times n}$ is said to be eventually positive (nonnegative, respectively) if there is a positive integer $k_0$, such that $A^k\rhd 0$ ($A^k\unrhd 0$, respectively) for all $k\geq k_0$.
\end{definition}
\begin{definition}
A matrix $A\in\mathbb{R}^{n\times n}$ is said to be eventually exponentially positive (nonnegative, respectively) if there is a positive real number $t_0$, such that $e^{At}\rhd 0$ ($e^{At}\unrhd 0$, respectively) for all $t\geq t_0$.
\end{definition}

The relationship between eventual positivity and eventual exponential positivity is given below.
\begin{lemma}[\!\!\cite{noutsos2008reachability}]\label{eep}
A matrix $A\in\mathbb{R}^{n\times n}$ is eventually exponentially positive if and only if there exists $s\geq 0$ such that $A+sI$ is eventually positive.
\end{lemma}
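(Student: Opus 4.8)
\textbf{Proof proposal for Lemma \ref{eep}.}

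The plan is to leverage two classical facts that translate eventual positivity into a spectral (Perron--Frobenius-type) condition. First, a matrix $M\in\mathbb{R}^{n\times n}$ is eventually positive if and only if $M$ possesses the strong Perron--Frobenius property: $\rho(M)$ is a simple eigenvalue of $M$, strictly greater in modulus than every other eigenvalue, with a strictly positive eigenvector, and the associated left eigenvector is also strictly positive. Second, for a real matrix $A$, $e^{At}$ is (entrywise) related to $A$ by the fact that the eigenvalues of $A+sI$ are exactly those of $A$ shifted by $s$, while the eigenvectors are unchanged; and $e^{(A+sI)t}=e^{st}e^{At}$, so $e^{At}\rhd 0$ if and only if $e^{(A+sI)t}\rhd 0$ for any $s$. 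Thus eventual exponential positivity is shift-invariant, and the entire content of the lemma is to reconcile ``exponential'' eventual positivity with ``power'' eventual positivity through a suitable shift.

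The forward direction: suppose $A$ is eventually exponentially positive, i.e., $e^{At}\rhd 0$ for all $t\geq t_0$. I would first argue that this forces $A$ to have the strong Perron--Frobenius property for the spectral abscissa $\mu(A)=\max\{\mathrm{Re}\,\lambda:\lambda\in\sigma(A)\}$: the dominant-eigenvalue structure of $e^{At}$ for large $t$ is governed by $e^{\mu(A)t}$ times the spectral projector onto the eigenspace of $\mu(A)$, and the positivity of $e^{At}$ together with Perron--Frobenius applied to the positive matrices $e^{At}$ shows $\mu(A)$ is real, simple, strictly dominant (in real part), with positive right and left eigenvectors. Then choose $s\geq 0$ large enough that $\rho(A+sI)=\mu(A)+s=\mu(A+sI)$, which is possible because shifting by a large positive $s$ pushes every eigenvalue into the right half-plane so that the one of largest real part also has largest modulus. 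Now $A+sI$ has the strong Perron--Frobenius property, hence is eventually positive by the characterization cited above.

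The reverse direction: suppose $A+sI$ is eventually positive for some $s\geq 0$, so $(A+sI)^k\rhd 0$ for $k\geq k_0$. Then $A+sI$ has the strong Perron--Frobenius property, so $\rho(A+sI)$ is real, simple, strictly dominant in modulus, with positive left/right eigenvectors; equivalently $\mu(A)=\rho(A+sI)-s$ is real, simple, strictly dominant in real part, with the same positive eigenvectors. Writing the spectral decomposition $e^{At}=e^{\mu(A)t}P + (\text{terms of strictly smaller exponential order})$, where $P=vw^{\top}/(w^{\top}v)$ with $v,w\rhd 0$, one sees $e^{-\mu(A)t}e^{At}\to P\rhd 0$ as $t\to\infty$, hence $e^{At}\rhd 0$ for all sufficiently large $t$; that is, $A$ is eventually exponentially positive.

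The main obstacle, and the step I would be most careful with, is the equivalence ``eventually positive $\iff$ strong Perron--Frobenius property'' and its exponential analogue: one must handle the possibility that $\sigma(A)$ contains several eigenvalues of the same maximal real part (which would spoil the dominance of a single term in $e^{At}$), and show that entrywise positivity of $e^{At}$ for a continuum of $t$ actually rules this out — the cleanest route is to invoke the known characterization of eventually exponentially positive matrices in terms of the strong Perron--Frobenius property directly, so that the lemma reduces to the routine shift bookkeeping described above. Everything else (the shift-invariance $e^{(A+sI)t}=e^{st}e^{At}$ and the choice of $s$ making the dominant eigenvalue also dominant in modulus) is elementary.
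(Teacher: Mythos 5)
Your argument is correct, but note that the paper does not prove this lemma at all: it is imported verbatim from \cite{noutsos2008reachability} (Noutsos and Tsatsomeros), so there is no in-paper proof to compare against. Your route --- reducing both forms of eventual positivity to the strong Perron--Frobenius property (for the matrix and its transpose) and then doing the shift bookkeeping $e^{(A+sI)t}=e^{st}e^{At}$ together with the choice of $s$ large enough that the spectral abscissa becomes the spectral radius --- is precisely the argument in that reference, so you have in effect reconstructed the standard proof. The one step you flag as delicate, ruling out several eigenvalues of $A$ on the line $\mathrm{Re}\,\lambda=\mu(A)$ in the forward direction, is in fact handled already by a single $t_1\ge t_0$: since $e^{At_1}\rhd 0$, its Perron root $e^{\mu(A)t_1}$ is simple and strictly dominant in modulus, and a non-real eigenvalue $\lambda$ of $A$ with $\mathrm{Re}\,\lambda=\mu(A)$ would, together with its conjugate $\bar\lambda$, contribute two eigenvalues of $e^{At_1}$ of modulus $e^{\mu(A)t_1}$, contradicting either strict dominance or simplicity; so the appeal to a continuum of $t$ is not even needed. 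With that observation your proof is complete and self-contained, whereas the paper simply cites the result.
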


As is well known, positive matrices possess the so-called strong Perron-Frobenius property, but the converse is not true. Recently, the equivalence between eventual positivity and the strong Perron-Frobenius property has been established.

\begin{definition}[\!\!\cite{noutsoslaa}]
A matrix $A\in\mathbb{R}^{n\times n}$ is said to possess the strong Perron-Frobenius property if $\rho(A)$ is a simple positive eigenvalue with a positive right eigenvector and $|\lambda|\!<\!\rho(A)$ for every other eigenvalue $\lambda\neq \rho(A)$ of $A$.
\end{definition}

\begin{lemma}[\!\!\cite{noutsoslaa}]\label{PFP}
Let $A\!\in\!\mathbb{R}^{n\times n}$ be a symmetric matrix. Then it possesses the strong Perron-Frobenius property if and only if it is eventually positive.
\end{lemma}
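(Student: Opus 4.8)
The plan is to prove Lemma \ref{PFP} by showing both implications, exploiting the spectral decomposition available for symmetric matrices. The ``only if'' direction (strong Perron-Frobenius property implies eventual positivity) is the substantive one; the ``if'' direction is essentially the classical fact that a positive matrix satisfies the strong Perron-Frobenius property, applied to a sufficiently high power.

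First I would establish the easy direction. Suppose $A$ is eventually positive, so $A^k\rhd 0$ for all $k\geq k_0$. Since $A$ is symmetric, write its spectral decomposition $A=\sum_{i=1}^n \lambda_i v_i v_i'$ with real eigenvalues $\lambda_i$ and orthonormal eigenvectors $v_i$; then $A^k=\sum_i \lambda_i^k v_i v_i'$. Pick any $k\geq k_0$ that is a multiple of, say, a large odd number so that $A^k$ is positive and its eigenvalues are $\lambda_i^k$. A positive matrix enjoys the strong Perron-Frobenius property by the classical Perron theorem, so $\rho(A^k)=\rho(A)^k$ is a simple eigenvalue of $A^k$ with a positive eigenvector, and every other eigenvalue is strictly smaller in modulus. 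Pulling this back through the $k$th-root map (and using that the eigenvectors of $A$ and $A^k$ coincide because $A$ is symmetric, so eigenspaces are not mixed) gives that $\rho(A)$ is a simple eigenvalue of $A$ with a positive eigenvector $v_1$, and $|\lambda_i|^k<\rho(A)^k$ forces $|\lambda_i|<\rho(A)$ for $i\neq 1$. Hence $A$ has the strong Perron-Frobenius property. A small point to handle carefully: one must rule out $\rho(A)$ failing to be an eigenvalue, i.e. the case where the spectral radius is attained only by a negative eigenvalue $-\rho(A)$; but then $A^k$ for even $k$ would have $+\rho(A)^k$ as a non-simple-looking eigenvalue, contradicting that $A^k$ is a positive matrix whose dominant eigenvalue is simple — so in fact one should choose $k$ with care, and the cleanest route is to take $k$ both even (to make all $\lambda_i^k\ge 0$ and the dominant one equal to $\rho(A)^k$) and large enough to exceed $k_0$; positivity of $A^k$ then pins down simplicity and the positive eigenvector.

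For the ``if'' direction, assume $A$ has the strong Perron-Frobenius property: $\rho:=\rho(A)$ is a simple positive eigenvalue with positive eigenvector $v_1$ (normalized, $\|v_1\|=1$), and $|\lambda_i|<\rho$ for all other eigenvalues. Using the spectral decomposition, $A^k=\rho^k v_1 v_1' + \sum_{i=2}^n \lambda_i^k v_i v_i'$, hence $\rho^{-k}A^k = v_1 v_1' + \sum_{i=2}^n (\lambda_i/\rho)^k v_i v_i'$. Since $|\lambda_i/\rho|<1$ for $i\ge 2$, the second term tends to the zero matrix as $k\to\infty$. The limit matrix $v_1 v_1'$ is entrywise strictly positive because $v_1\rhd 0$ (all entries of $v_1$ are strictly positive, hence all entries of the rank-one matrix $v_1 v_1'$ are strictly positive). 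Therefore there exists $k_0$ such that for all $k\geq k_0$ every entry of $\rho^{-k}A^k$ is within a small enough neighborhood of the corresponding (positive) entry of $v_1 v_1'$ to be positive; since $\rho^k>0$, this gives $A^k\rhd 0$ for all $k\geq k_0$, i.e. $A$ is eventually positive.

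The main obstacle — and the only place requiring genuine care rather than bookkeeping — is the first direction, specifically arguing that the strong Perron-Frobenius property of the power $A^k$ (guaranteed by classical Perron theory for positive matrices) descends to $A$ itself, including the sign issue of whether $\rho(A)$ or $-\rho(A)$ is the dominant eigenvalue. The symmetry hypothesis is exactly what makes this clean: real spectrum and an orthonormal eigenbasis shared by $A$ and all its powers, so no Jordan-block or complex-eigenvalue complications arise, and choosing $k$ even and large isolates $\rho(A)^k$ as the unique dominant eigenvalue of the positive matrix $A^k$. I would also note explicitly that simplicity of $\rho(A)$ as an eigenvalue of $A$ follows from simplicity of $\rho(A)^k$ as an eigenvalue of $A^k$ together with the fact that distinct eigenvalues of $A$ cannot collapse to the same $k$th power when we already know they are bounded in modulus by $\rho(A)$ and $k$ is even — or more simply, the $\rho(A)^k$-eigenspace of $A^k$ equals the $\rho(A)$-eigenspace of $A$ when $k$ is even and $\rho(A)$ is the modulus-maximal value, since any eigenvector for $\rho(A)^k$ decomposes in the eigenbasis of $A$ into components with $\lambda_i^k=\rho(A)^k$, forcing $\lambda_i=\rho(A)$.
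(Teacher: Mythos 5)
The paper itself offers no proof of this lemma --- it is imported verbatim from the cited reference \cite{noutsoslaa} --- so your argument has to stand on its own. Your ``strong Perron--Frobenius $\Rightarrow$ eventually positive'' direction is correct and is the standard spectral argument: $\rho^{-k}A^k\to v_1v_1'\rhd 0$, so all sufficiently high powers are entrywise positive.

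The converse direction, however, has a genuine gap: you never correctly rule out the possibility that the spectral radius of $A$ is attained only by the \emph{negative} eigenvalue $-\rho(A)$, and the strong Perron--Frobenius property explicitly requires $\rho(A)$ itself to be a (positive) eigenvalue. Your first attempt at a contradiction does not work: if $-\rho(A)$ is a simple eigenvalue of $A$ and no other eigenvalue has modulus $\rho(A)$, then $\rho(A)^k$ is a perfectly \emph{simple} eigenvalue of $A^k$ for even $k$, so there is no clash with Perron's theorem for the positive matrix $A^k$. Your fallback --- ``take $k$ even and large'' --- cannot close the gap either, because even powers are blind to the sign of the dominant eigenvalue: the symmetric matrix $A=-vv'$ with $v\rhd 0$ satisfies $A^k\rhd 0$ for every even $k\ge 2$, yet $\rho(A)=v'v$ is not an eigenvalue of $A$ and $A$ is not eventually positive. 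What saves the argument is that eventual positivity demands $A^k\rhd 0$ for \emph{all} $k\ge k_0$, odd $k$ included. After your even-power step has produced the unique modulus-maximal eigenvalue $\lambda_1$ with $|\lambda_1|=\rho(A)$ and positive eigenvector $v_1$, take an odd $k\ge k_0$: then $\lambda_1^k v_1=A^kv_1\rhd 0$ forces $\lambda_1^k>0$, hence $\lambda_1=\rho(A)>0$. With that one line inserted, your proof is complete; the remaining descent from $A^k$ to $A$ (shared eigenbasis by symmetry, and $|\lambda_i|^k<\rho(A)^k\Rightarrow|\lambda_i|<\rho(A)$) is fine as written.
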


Based on eventual nonnegativity, the following generalization of M-matrix has been introduced in \cite{olesky2009m}.
\begin{definition}
A matrix $A\in\mathbb{R}^{n\times n}$ is an eventual M-matrix if it can be expressed as $A=sI-B$, where $s\geq \rho(B)$ and $B$ is eventually nonnegative.
\end{definition}

Now, we are ready to state the main theorem of this section.
\begin{theorem}\label{theoremep}
For a given signed Laplacian $L\in\mathbb{R}^{n\times n}$, the following statements are equivalent:
\begin{enumerate}[(a)]
\item $L\geq0$ and $\mathrm{corank}(L)=1$.
\item $L$ is an eventual M-matrix that can be expressed as $L=sI-B$, where $s=\rho(B)$, and $B$ is eventually positive.
\item $-L$ is eventually exponentially positive.
\end{enumerate}
\begin{proof}
First, we show that (a) implies (b). Suppose $L\geq0$ and has corank 1. Let $\lambda_1,\lambda_2,\dots,\lambda_n$ be the eigenvalues of $L$ ordered nondecreasingly, i.e.,
$0=\lambda_1<\lambda_2\leq \lambda_3\leq \dots\leq \lambda_n$.
Let $s=\lambda_n$ and $B=sI-L$. Clearly, $\rho(B)=\lambda_n$ is a simple positive eigenvalue of $B$ with a corresponding eigenvector $\mathbf{1}$. Moreover, $\rho(B)$ is greater than the magnitude of any other eigenvalue of $B$. Therefore, $B$ possesses the strong Perron-Frobenius property and is eventually positive by Lemma \ref{PFP}. This validates that $L=sI-B$ is an eventual M-matrix with $B$ being eventually positive.

Second, we show that (b) implies (c). From the fact that $L$ is an eventual M-matrix with $L=sI-B$ and $B$ is eventually positive, it follows that $-L+sI$ is eventually positive. By Lemma \ref{eep}, $-L$ is eventually exponentially positive.

Finally, we show that (c) implies (a). By definition, if $-L$ is eventually exponentially positive, then $e^{-L}$ is eventually positive and, hence, satisfies the strong Perron-Frobenius property in light of Lemma \ref{PFP}. Note that $e^{-L}$ has an eigenvalue $1$ with a corresponding eigenvector $\mathbf{1}$. Since $e^{-L}$ is symmetric, all the other eigenvectors are orthogonal to $\mathbf{1}$ and hence cannot be positive. Therefore, $\rho(e^{-L})=1$ is a simple eigenvalue of $e^{-L}$ and is greater than all the other eigenvalues in absolute value. From this, the statement (a) follows.
\end{proof}
\end{theorem}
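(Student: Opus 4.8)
The plan is to establish the cycle of implications $(a)\Rightarrow(b)\Rightarrow(c)\Rightarrow(a)$, using as the central mechanism the interplay between symmetry, the known zero eigenvalue of $L$ with eigenvector $\mathbf{1}$, and Lemma~\ref{PFP} (strong Perron--Frobenius property $\Leftrightarrow$ eventual positivity for symmetric matrices) together with Lemma~\ref{eep} (eventual exponential positivity $\Leftrightarrow$ a shift is eventually positive). The key observation to exploit repeatedly is that if $A$ is symmetric with $A\mathbf{1}=\mu\mathbf{1}$, then every eigenvector for an eigenvalue $\nu\neq\mu$ is orthogonal to $\mathbf{1}$ and hence cannot have all-positive (or all-negative) entries; so the strong Perron--Frobenius property for such an $A$ reduces to checking that the \emph{largest} eigenvalue is simple and strictly dominates all others in absolute value.

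For $(a)\Rightarrow(b)$: assuming $L\ge 0$ with $\mathrm{corank}(L)=1$, order the eigenvalues $0=\lambda_1<\lambda_2\le\cdots\le\lambda_n$, set $s=\lambda_n$ and $B=sI-L$. Then $B$ is symmetric, $B\mathbf{1}=\lambda_n\mathbf{1}$, the eigenvalues of $B$ are $\lambda_n-\lambda_i\ge 0$, the largest is $\lambda_n$ (simple, from the strict gap $\lambda_1<\lambda_2$), and all others are strictly smaller and nonnegative hence strictly dominated in modulus. So $B$ has the strong Perron--Frobenius property, hence is eventually positive by Lemma~\ref{PFP}, and $L=sI-B$ with $s=\rho(B)$ is an eventual M-matrix of the asserted form. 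For $(b)\Rightarrow(c)$: from $L=sI-B$ we get $-L=B-sI$, so $-L+sI=B$ is eventually positive; Lemma~\ref{eep} immediately gives that $-L$ is eventually exponentially positive.

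For $(c)\Rightarrow(a)$: if $-L$ is eventually exponentially positive, then in particular $e^{-L}\rhd 0$ eventually when raised to powers — more directly, $e^{-L}$ itself is eventually positive since $e^{-Lk}=(e^{-L})^k$ and $e^{-Lt}\rhd 0$ for large $t$. Thus $e^{-L}$ is symmetric and eventually positive, so by Lemma~\ref{PFP} it has the strong Perron--Frobenius property. Since $e^{-L}\mathbf{1}=\mathbf{1}$, the eigenvalue $1$ is the only one with a positive eigenvector, so $\rho(e^{-L})=1$, it is simple, and every other eigenvalue $e^{-\lambda_i}$ satisfies $|e^{-\lambda_i}|<1$, i.e. $\lambda_i>0$. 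Combined with the known eigenvalue $0$ of $L$ at $\mathbf{1}$, this yields $0=\lambda_1<\lambda_2\le\cdots\le\lambda_n$, which is exactly $L\ge 0$ with $\mathrm{corank}(L)=1$.

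I expect the only subtle point — the place a careful writer must not gloss — is the step in $(c)\Rightarrow(a)$ asserting that eventual exponential positivity of $-L$ forces eventual positivity of the single matrix $e^{-L}$ (so that Lemma~\ref{PFP} applies): one should note $e^{-L}$ is symmetric, commutes with its powers, and $(e^{-L})^k=e^{-Lk}\rhd 0$ for all integers $k$ with $k\ge t_0$, giving eventual positivity directly. Everything else is bookkeeping with Sylvester-type spectral arguments and the two cited lemmas; no hard estimate is needed.
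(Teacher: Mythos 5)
Your proof is correct and follows essentially the same route as the paper: the cycle $(a)\Rightarrow(b)\Rightarrow(c)\Rightarrow(a)$ via Lemma~\ref{PFP} and Lemma~\ref{eep}, with the same choice $s=\lambda_n$, $B=sI-L$, and the same symmetry-plus-$\mathbf{1}$-eigenvector argument to pin down $\rho(e^{-L})=1$ in the last step. The one place you are more explicit than the paper --- justifying that eventual exponential positivity of $-L$ yields eventual positivity of the single matrix $e^{-L}$ via $(e^{-L})^k=e^{-Lk}\rhd 0$ for integers $k\ge t_0$ --- is a welcome clarification of a step the paper dispatches with ``by definition.''
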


The implications of Theorem \ref{theoremep} in linear consensus problems are interesting. {It answers the questions in Section II.D regarding consensus over signed graphs under the protocol (\ref{consensusp}). When there exist repelling interactions between some agents, the corresponding edge weights are negative and $L$ is a signed Laplacian.
It is known that consensus can be reached if and only if $L\geq 0$ with a simple zero eigenvalue. Then, Theorem \ref{theoremep} tells that the eventual exponential positivity of $-L$ is not only sufficient in guaranteeing consensus as indicated in \cite{altafini2015predictable}, but also necessary.} Moreover, $-L$ being eventually exponentially positive means that for a given nonnegative initial condition $x(0)$, the agents' states $x(t)$ may exit the nonnegative orthant temporarily and return to it at a future time and remain therein forever. {This is a prominent distinction from the consensus over a conventional graph with only positive weights, in which the states always stay nonnegative under a nonnegative initial condition.} Below is a simulation example for illustration.

{\begin{example}
Consider a consensus problem of nine agents interacting over the signed graph depicted in Figure \ref{graphtwonegative}, and under the consensus protocol (\ref{consensusp}). From the analysis in Example \ref{kronexample}, we already know that the associated signed Laplacian $L$ is positive semidefinite
with a simple zero eigenvalue. Hence, the agents can reach consensus. Let the initial states be
\begin{align*}
\begin{split}
x(0)= [ \begin{matrix}1.61&3.52&6.02&4.76&17.57&0.1&10.87&2.28&15.79\end{matrix} ]'.
\end{split}
\end{align*}
The simulation result is shown in Figure \ref{trajectory}, where we can see that consensus is indeed reached, but the state of the sixth agent becomes negative temporarily due to the eventual exponential positivity of $-L$.

\begin{figure}[htbp]
\centering
\includegraphics[scale=0.4]{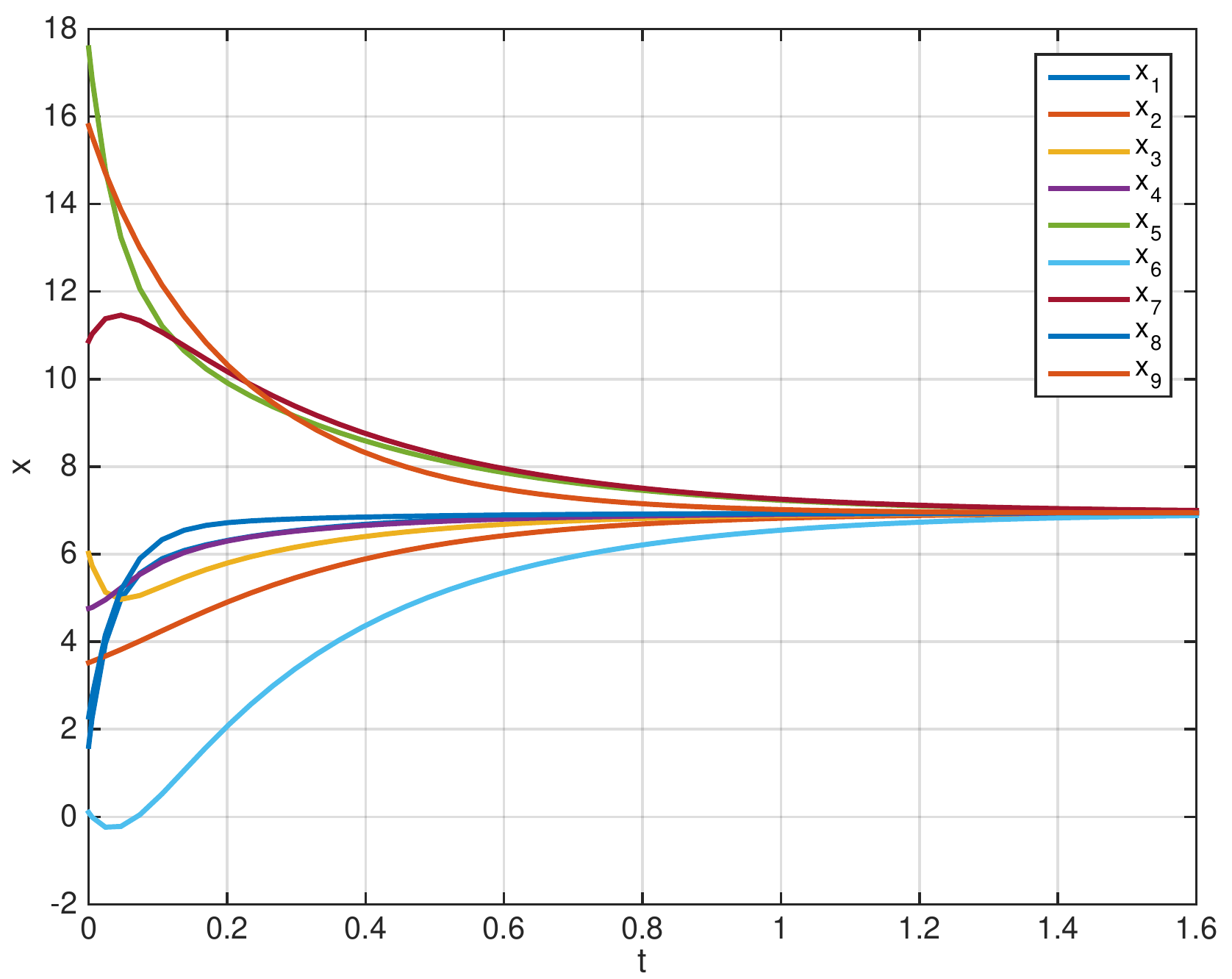}
\caption{State trajectories of the agents.}
\label{trajectory}
\end{figure}
\end{example}}

\section{Conclusion} \label{ending}
In this paper, we studied the spectral properties of signed Laplacians. We first characterized the positive semidefiniteness of signed Laplacians in terms of the negative weights via both the Kron reduction and $n$-port network theory. The study was then extended to characterizing the inertias of indefinite signed Laplacians. Moreover, we revealed the connections between signed Laplacians, generalized M-matrices, and eventually exponentially positive matrices.

One future direction of research is to extend the study to directed signed graphs. We wish to explore conditions on the negative weights under which the associated signed Laplacians have all the eigenvalues in the open right half plane except for a simple zero eigenvalue. Preliminary results on some special cases can be found in \cite{ahmadizadeh2017eigenvalues,ASMNc,mukherjee2016consensus}. A generalization of the notion of effective resistance to directed graphs proposed in \cite{YSL16p1,YSL16p2} may give a clue in this exploration. We also envision that the connection with eventual exponential positivity will continue to play an important role in the directed case.

\section*{Acknowledgements}
The authors would like to thank Prof. Claudio Altafini of Link\"{o}ping University, Prof. Florian D\"{o}rfler of Swiss Federal Institute of Technology, and Dr. Yue Song, Prof. Tao Liu, and Prof. David J. Hill of University of Hong Kong for valuable discussions.

\end{document}